\pdfoutput=1
\documentclass[11pt]{article}
\usepackage[letterpaper,margin=1in]{geometry}

\usepackage[utf8]{inputenc} 
\usepackage[T1]{fontenc}    
\usepackage{url}            
\usepackage{booktabs}       
\usepackage{amsfonts}       
\usepackage{nicefrac}       
\usepackage{microtype}      
\usepackage{xcolor}         

\usepackage{nicefrac,bm,bbm}
\usepackage{amsmath,amsthm,color,colortbl,amssymb}

\usepackage{hyperref}

\usepackage{caption}
\captionsetup[table]{skip=5pt}
\captionsetup[figure]{font=small}

\usepackage{natbib}

\usepackage{wrapfig}
\usepackage{comment}

\usepackage{graphicx}
\usepackage{multirow}
\usepackage{xspace}
\usepackage{mathtools}
\usepackage{euscript}
\usepackage{thm-restate}
\usepackage{tikz}
\usepackage{subcaption}
\usepackage{tablefootnote,tabularx}
\usepackage[T1]{fontenc}    


\usepackage{pifont}

\usepackage{euscript}
\usepackage{mleftright}


\usepackage[capitalize]{cleveref}

\crefalias{AlgoLine}{line}%
\crefname{algocf}{Algorithm}{Algorithms}
\usepackage[noend]{algorithm}

\usepackage{cancel}
\usepackage{newpxtext}
\usepackage{newpxmath}
\definecolor{niceRed}{RGB}{190,38,38}
\definecolor{Red2}{RGB}{219, 50, 54}
\definecolor{mgreen}{RGB}{160, 200, 140}
\definecolor{blueGrotto}{RGB}{5,157,192}
\definecolor{limeGreen}{HTML}{81B622}
\definecolor{myellow}{rgb}{0.88,0.61,0.14}
\definecolor{darkGreen}{HTML}{2E8B57}
\definecolor{navyBlueP}{HTML}{03468F}
\definecolor{Sepia}{HTML}{7F462C}
\definecolor{red2}{HTML}{1F462C}
\definecolor{orange2}{HTML}{FF8000}
\definecolor{mgray}{HTML}{ABB3B8}
\definecolor{myPurple}{RGB}{175,0,124}
\definecolor{mypurple2}{rgb}{0.8,0.62,1}
\definecolor{royalBlue}{HTML}{057DCD}
\definecolor{mpink}{HTML}{FC6C85}
\theoremstyle{plain}
\newtheorem{theorem}{Theorem}[section]

\newtheorem{lemma}[theorem]{Lemma}

\theoremstyle{definition}

\theoremstyle{remark}
\newtheorem{remark}[theorem]{Remark}

\hypersetup{
	colorlinks = true,
	linkcolor = niceRed,
	citecolor = royalBlue,
	linktocpage = true,
	urlcolor = mpink
}


\usepackage{amsmath,amsfonts,bm}
\usepackage{amssymb}
\usepackage{mathtools}
\usepackage{amsthm}








\def\eqref#1{equation~\ref{#1}}









\def\1{\bm{1}}










\DeclareMathAlphabet{\mathsfit}{\encodingdefault}{\sfdefault}{m}{sl}
\SetMathAlphabet{\mathsfit}{bold}{\encodingdefault}{\sfdefault}{bx}{n}












\newcommand{\Rset}{\mathbb{R}^{m}_{+}}
\newcommand{\Actions}{\mathcal{A}}

\DeclareMathOperator*{\argmax}{arg\,max}

\newcommand{\sfP}{\mathsf{P}}
\newcommand{\sfA}{\mathsf{A}}

\usepackage{tikz}
\usetikzlibrary{arrows.meta, positioning}
\usepackage{pgfplots}
\usepgfplotslibrary{fillbetween}

\usepackage{url}
\allowdisplaybreaks

\usepackage{multirow}
\usepackage{tikz}
\usepackage[utf8]{inputenc} 
\usepackage[T1]{fontenc}    
\usepackage{hyperref}       
\usepackage{url}            
\usepackage{booktabs}       
\usepackage{amsfonts}       
\usepackage{nicefrac}       
\usepackage{microtype}      
\usepackage{xcolor}         
\usepackage{amsmath}
\usepackage{amsthm,enumerate,amssymb}
\usepackage{thm-restate}
\usepackage{thmtools} 
\usepackage{mathtools}
\usepackage{wrapfig}
\usepackage{amsmath}
\usepackage{pifont}

\usepackage{comment}
\usepackage{graphics}
\usepackage{graphicx}
\usepackage{appendix}
\usepackage{algorithm}
\usepackage{enumitem}
\usepackage[noend]{algpseudocode}
\algdef{SE}[DOWHILE]{Do}{doWhile}{\algorithmicdo}[1]{\algorithmicwhile\ #1}%
\usepackage{bbm}
\algnewcommand{\IfThen}[2]{
	\State \algorithmicif\ #1\ \algorithmicthen\ #2}

\usepackage{tikzscale} 

\usepackage[short, nocomma]{optidef} 




\newcommand{\A}{\mathcal{A}}

\usepackage[textsize=tiny]{todonotes}

\algblockdefx[Execute]{Execute}{EndExecute}[1]{\textbf{until} #1 \textbf{run}}{\textbf{end execute}}
\makeatletter
\ifthenelse{\equal{\ALG@noend}{t}}%
{\algtext*{EndExecute}}
{}%
\makeatother

\algblockdefx[IfInline]{IfInline}{EndIfInline}[1]{\textbf{if} #1 \textbf{then} }{\textbf{end if}}
\makeatletter
\ifthenelse{\equal{\ALG@noend}{t}}%
{\algtext*{EndIfInline}}
{}%
\makeatother

\definecolor{mygreen}{rgb}{0.0, 0.5, 0.0}
\definecolor{myorange}{rgb}{0.55, 0.62, 1}

%

\title{Contract Design Under Approximate Best Responses}


\author{}
%

\renewcommand{\eqref}[1]{(\ref{#1})}

\author{
	{Francesco Bacchiocchi}$^\dag$ \and
	{Jiarui Gan}$^\ddag$\and
	{Matteo Castiglioni}$^\dag$ \and
	{Alberto Marchesi}$^\dag$ \and
	{Nicola Gatti}$^\dag$ \\\\
	 $^\dag$ Politecnico di Milano\\
	$^\ddag$  University of Oxford\\\\
	\texttt{francesco.bacchiocchi@polimi.it, jiarui.gan@cs.ox.ac.uk}\\
	\texttt{ \{matteo.castiglioni, alberto.marchesi, nicola.gatti\}@polimi.it}
}

\begin{document}
\maketitle
\begin{abstract}
	Principal-agent problems model scenarios where a principal incentivizes an agent to take costly, unobservable actions through the provision of payments.
	Such problems are ubiquitous in several real-world applications, ranging from blockchain to the delegation of machine learning tasks.
	In this paper, we initiate the study of hidden-action principal-agent problems under \emph{approximate best responses}, in which the agent may select \emph{any} action that is not too much suboptimal given the principal's payment scheme (a.k.a.~\emph{contract}).
	Our main result is a polynomial-time algorithm to compute an optimal contract under approximate best responses.
	This positive result is perhaps surprising, since, in Stackelberg games, computing an optimal commitment under approximate best responses is computationally intractable.
	We also investigate the learnability of contracts under approximate best responses, by providing a no-regret learning algorithm for a natural application scenario where the principal has \emph{no} prior knowledge about the environment.
\end{abstract}
\section{Introduction}\label{sec:introduction}

In \emph{hidden-action principal-agent} problems, a principal tries to steer the behavior of a self-interested agent toward favorable outcomes.
The agent has to take a costly action that stochastically determines an outcome resulting in a reward for the principal.
The main challenge is that the agent's action is \emph{hidden} to the principal, who can only observe the realized outcome.
Thus, the principal influences the agent's behavior by \emph{committing to a contract},
%
which is an outcome-dependent payment scheme whose aim is to induce the agent to take a high-cost action leading to high principal's rewards.
The principal's goal is to design an \emph{optimal} contract, namely one maximizing their utility, \emph{i.e.}, rewards minus payments.

Nowadays, principal-agent problems find application in a terrific number of real-world settings, such as, \emph{e.g.}, crowdsourcing~\citep{ho2015adaptive}, online labor platforms~\citep{kaynar2023estimating}, blockchain~\citep{cong2019blockchain}, delegation of machine learning tasks~\citep{cai2015optimum}, and pay-for-performance healthcare~\citep{bastani2016analysis,bastani2019evidence}.
Moreover, algorithmic contract design is playing a crucial role in today's world, which increasingly relies on AI agents to perform complex tasks (see, \emph{e.g.},~\citep{hadfield2019incomplete,saig2024incentivizing}).

Previous works on algorithmic contract design assume that the agent always plays a \emph{best-response action} to the principal's contract.
However, if the agent actually responds (even slightly) suboptimally to the principal, then the principal's utility may substantially deteriorate.
This may be the case in most of the real-world applications of interest, for several different reasons.
For instance, the principal  may \emph{not} perfectly know agent's features and account for the wrong best response, the agent may \emph{not} be powerful enough to compute an (exact) best-response action, or they may inaccurately interpret the principal's contract.

In this paper, we initiate the study of hidden-action principal-agent problems \emph{under approximate agent's best responses}.
Specifically, we consider settings in which the agent may take actions that are up to $\delta \in (0,1)$ suboptimal under the principal's contract.
%
We do \emph{not} make any assumption on the specific $\delta$-best response selected by the agent, but we allow them to take \emph{any} of such actions.
Thus, we take a \emph{worst-case approach} and consider the problem of designing contracts under the assumption that the agent selects the worst $\delta$-best response for the principal.
Contracts designed in such a way are said to be \emph{robust}, as they guard the principal against any possible (conceivable) suboptimal behavior of the agent.

\subsection{Results and Techniques}

In this paper, we provide an extensive treatment of the computational and learnability aspects of the design of robust contracts.
The results presented in the paper are organized into three main parts as summarized below.

\paragraph{The Price of Robustness}
In the first part of the paper, as a preliminary analysis, we provide a characterization of the maximum utility that the principal can achieve by means of robust contracts. 
Specifically, we provide upper and lower bounds on this utility, as a function of a parameter $\delta \in (0,1)$ quantifying the agent's best response suboptimality.
These bounds give insights on the \emph{price (in terms of utility) that the principal incurs for being robust}, as the parameter $\delta$ can be seen as a measure of the robustness level of the principal's contracts.
Interestingly, our results show that, differently from what happens in general Stackelberg games (see, \emph{e.g.},~\citep{gan2024robust}), upper/lower bounds do \emph{not} depend on the inducibility gap $\Delta>0$ characterizing the problem instance.
In order to derive the bounds, we prove that it is possible to convert any non-robust contract into a robust one by properly moving its payments in the direction of principal’s rewards, and that, in an optimal robust contract, which provides the principal with a utility greater than zero, the agent’s utility should be at least $\delta$.

\paragraph{Computing Robust Contracts}
The second part of the paper addresses the problem of computing a robust contract that is optimal for the principal.
Our main contribution is a polynomial-time algorithm for this problem.
%
%
This is perhaps surprising, since analogous problems in Stackelberg and Bayesian persuasion settings are known to be computationally intractable~\citep{gan2024robust,yang2024computational},
despite having more amenable solution spaces ($\Delta_m$ in these problems vs. $\mathbb{R}_+^m$ in contract design).
%
%
At a high level, our algorithm cleverly exploits a particular structure that we discover in the robustness constraints (which ensure that the agent plays the worst approximate best response for the principal).
%
%
%
Intuitively, we show that, when the agent's best response and their (worst) approximate best response are fixed to two arbitrarily-selected actions, the problem of computing a utility-maximizing robust contract can be formulated by means of a union of $n+1$ different \emph{linear programs} (LPs).
Therefore, by taking the maximum over all these LPs, one can compute a utility-maximizing robust contract once the agent’s best response and their (worst) approximate best response are fixed.
By further iterating this procedure for all the possible choices of the action pair, we then determine an optimal robust contract.

\paragraph{Learning Robust Contracts} 
In the third and last part of the paper, we investigate the learnability of robust contracts.
Specifically, we study an online learning framework similar to the one analyzed by~\citet{Zhu2023Sample}, in which the features of agent's actions, \emph{i.e.}, costs and probabilities over outcomes, depend on an agent's \emph{type} that is sampled at each round from some (fixed) unknown probability distribution.
%
%
%
At each round, after committing to some contract, the principal only observes the outcome (and its associated reward) realized as an effect of the approximate best response played by the agent.
Our main result within this online learning framework is the design of an algorithm that achieves sublinear regret with respect to always playing an optimal robust contract.
Additionally, we show that when the parameter $\delta \in (0,1)$ measuring the suboptimality of agent's actions is sufficiently small with respect to the time horizon, our algorithm also achieves sublinear regret with respect to always playing an optimal \emph{non-robust} contract. 
Our approach presents some advantages compared to the state-of-the-art proposed by~\citet{Zhu2023Sample} for the non-robust version of the problem, while achieving similar regret guarantees.
Indeed, our algorithm employs a simpler discretization of the set of contracts employed by the principal, which does \emph{not} require that the principal knows its rewards beforehand.
This is made possible by the fact that we rely on a novel ``continuity'' argument for the principal’s expected utility (see Lemma~\ref{lem:epsilonconvert}),  which is different from the one originally proposed by~\citet{Zhu2023Sample}.

\subsection{Related Works}
\paragraph{Robustness to approximate best responses} 
Two works that are closely related to ours are~\citep{gan2024robust,yang2024computational}, which consider \emph{robustness} notions analogous to ours, though in different settings.
Specifically,~\citet{gan2024robust} initiate this research line, by studying th problem of computing robust leader's commitments in \emph{Stackelberg games} where the follower plays an approximate best response.
%
%
They show that it is \textsf{NP}-hard to approximate an optimal robust commitment of the leader and, in accordance to this hardness result, they provide a quasi-polynomial-time approximation scheme (QPTAS).
\citet{yang2024computational} extend the study initiated by~\citet{gan2024robust} to \emph{Bayesian persuasion}, with the goal of computing robust signaling schemes under approximate best responses of the receiver.
Similarly to~\citet{gan2024robust}, they show that computing an approximately-optimal robust signaling scheme is \textsf{NP}-hard and provide a QPTAS.
In sharp contrast, we show that, in hidden-action principal-agent problems, an optimal robust contract can be computed efficiently.
%
\paragraph{Contract design} 
Contract theory has been extensively studied in economics~\cite{holmstrom1991multitask}. 
However, the interest in its computational aspects is more recent. 
\citep{dutting2019simple} analyzes linear contracts, proving approximation bounds. \citep{castiglioni2022bayesian,guruganesh2021contracts} show that computing optimal contracts in Bayesian settings is intractable. 
\citep{alon2023bayesian} studies Bayesian linear contracts, proving near-optimality under sufficient uncertainty. 
Other works explore combinatorial principal-agent problems~\citep{babaioff2006combinatorial,babaioff2009free} and multi-agent extensions of hidden-action problems~\citep{castiglioni2023multi,duetting2024multiagent}.

\paragraph{Other forms of robustness in contract design} 
Our work is also related to other research lines addressing different concepts of \emph{robustness} in contract design.
%
%
For instance, \citet{carroll2015robustness} studies settings where the principal only knows a superset of agents' actions, while~\citet{dutting2019simple} introduce a different notion of uncertainty in which the principal has partial knowledge of the distributions over outcomes associated with agent's actions.
Both these works show that linear contracts are a sufficient class of contracts to determine the \emph{min-max robust} optimal contract.
Notice that these two frameworks differ from ours, as within our framework, when the robustness parameter $\delta \in (0,1)$ is arbitrarily small, the problem becomes very close to the classical version of the hidden-action principal-agent problem, in which it is known that linear contracts are \emph{not} generally optimal (see, \emph{e.g.},~\citep{dutting2019simple}).
Recently, \citet{bernasconi2024regret} study settings where uncertainty lies in the costs of agent's actions.
In this framework, the principal only knows a set containing the true cost vectors, and computing an optimal min-max robust contract is \textnormal{APX}-hard.
\paragraph{Learning in principal-agent problems} 
Our work is also related to online learning problems in hidden-action principal-agent problems.
\citet{Zhu2023Sample} study general hidden-action principal-agent problem instances in which the principal faces multiple agent's types.
They show that it is possible to design an algorithm that achieves a regret bound of the order of $ \widetilde{\mathcal{O}}(\sqrt{m} \cdot T^{1 - 1/(2m+1)})$ when the principal selects contracts from the hypercube $[0,1]^m$, where $m$ is the number of outcomes.
In our work, we show that it is possible to design an algorithm achieving similar regret guarantees even when the different agent's types select approximate best responses.
Our algorithm presents some advantages compared to the one proposed by~\citet{Zhu2023Sample}. 
Specifically, our approach employs a simpler discretization of the hypercube---used during the execution of the algorithm---compared to~\cite{Zhu2023Sample}, and it does \emph{not} require prior knowledge of the principal's rewards.
%
Furthermore,~\citet{Zhu2023Sample} provide an (almost-matching) lower bound of $\Omega(T^{1-1/(m+2)})$, which holds even with a single agent's type.
Some recent works have introduced additional hypothesis to overcome this negative result (see, \emph{e.g.},~\cite{bacchiocchilearning,chen2024bounded}).
\section{Preliminaries}\label{sec:preliminaries}

In this section, we first introduce the classical hidden-action principal-agent problems (Section~\ref{sec:prelim_model_classical}), and then the variation studied in this paper, in which the agent plays an approximate best response (Section~\ref{sec:prelim_model_robust}).

\subsection{Hidden-Action Principal-Agent Problems}\label{sec:prelim_model_classical}

An instance of \emph{hidden-action principal-agent problem} is characterized by a tuple $\left(\mathcal{A},\Omega,F,r,c\right)$, where $\mathcal{A}$ is a finite set of $n \coloneqq |\mathcal{A}|$ actions available to the agent, $\Omega$ is a finite set of $m \coloneqq |\Omega|$ possible outcomes, $F \in [0,1]^{m \times n}$ is a matrix representing the effects of agent's action, $r \in [0,1]^m$ is a reward vector for the principal, and $c \in [0,1]^n$ is a vector of agent's costs.
Each agent's action $a \in \mathcal{A}$ determines a probability distribution over outcomes, encoded by a column $F_a \in \Delta_{\Omega}$ of the matrix $F$, and it results in a cost for the agent, encoded by a component $ c_a \in [0, 1]$ of vector $c$.\footnote{We denote by $\Delta_X$ the set of all probability distributions over the set $X$.
	Given $n \in \mathbb{N}_{>0}$, we write $[n]\coloneqq \{1,\dots, n\}$.}
%
	%
	%
%
We denote by $F_{a,\omega} \in [0,1]$ the probability with which action $a$ results in outcome $\omega \in \Omega$, as prescribed by $F_a$.
Thus, it must be the case that $\sum_{\omega \in \Omega}F_{a,\omega}=1$ for all $a \in \mathcal{A}$.
Each outcome $\omega \in \Omega$ is associated with a reward for the principal, encoded by a component $r_\omega \in [0,1]$ of the vector $r$.
Thus, whenever the agent selects an action $a \in \mathcal{A}$, the principal's expected
reward can be computed as $R_a\coloneqq F_{a}\cdot r$.\footnote{We denote by 
	$x\cdot y$ the dot product of two vectors $x,y \in \mathbb{R}^d$.}

The principal commits to an outcome-dependent payment scheme called \emph{contract}, which is a vector $p \in \mathbb{R}^{m}_{+}$ defining a payment $p_\omega \geq 0$ from the principal to the agent for every outcome $\omega \in \Omega$.\footnote{As customary in contract theory~\citep{carroll2015robustness}, we assume that the agent has \emph{limited liability}, meaning that the payments can only be from the principal to the agent, and \emph{not viceversa}.}
Given a contract $p \in \mathbb{R}^{m}_{+}$, the agent plays a \emph{best-response} action that is: {(i)} \emph{incentive compatible} (IC), which means that it maximizes their expected utility; and {(ii)} \emph{individually rational} (IR), meaning that it has non-negative expected utility.
We assume w.l.o.g. that there always exists at least one \emph{opt-out} action $a \in \mathcal{A}$ with null cost, \emph{i.e.}, $c_a = 0$, and for which the principal's expected reward is equal to zero, \emph{i.e.}, $ F_{a} \cdot  r = 0$.
Since the agent's utility is at least zero when playing the \emph{opt-out} action, any IC action is also IR, allowing us to focus on incentive compatibility~only.

Whenever the principal commits to a contract $p \in \mathbb{R}^{m}_{+}$ and the agent responds by playing an action $a \in \mathcal{A}$,
the agent's and the principal's expected utilities are, respectively,
\begin{align*}
	u^\sfA(p,a)\coloneqq F_{a} \cdot p - c_a, 
	\;\text{ and }\;
	u^\sfP(p,a)\coloneqq F_{a} \cdot (r-p).
\end{align*}
%
%
The set $A(p) \subseteq \mathcal{A}$ of agent's best responses in a contract $p \in \mathbb{R}^{m}_{+}$ is defined as follows:
\begin{equation*}
	A(p) \coloneqq \arg\max_{a \in \mathcal{A}} \left \{ F_a \cdot p - c_a \right\}.
\end{equation*}
In classical (non-robust) hidden-action principal-agent problems, the agent breaks ties in favor of the principal when having multiple best responses available (see, \emph{e.g.}, \citep{dutting2019simple}).
We denote by $a(p) \in A(p)$ the action played by the agent in a given contract $p \in \mathbb{R}^{m}_{+}$.
This is an action $a \in A(p)$ that maximizes the principal's utility $F_{a} \cdot \left( r - p \right)$.
Formally, $a(p) \in \argmax_{a \in A(p)} F_{a} \cdot \left( r - p \right) $.
%
%
Then, the goal of the principal is to design a contract $p \in \mathbb{R}^{m}_{+}$ that maximizes their expected utility $u^\sfP(p,a(p))$.
We say that a contract $p^\star \in\mathbb{R}^{m}_{+}$ is a (non-robust) \emph{optimal} contract if it holds $p^\star \in \arg\max_{p \in \mathbb{R}^{m}_{+}} u^{\textnormal{P}}(p,a(p))$.
In the following, we define the principal's utility in an optimal (non-robust) contract as $\textnormal{OPT} := \max_{p \in \mathbb{R}^{m}_{+}} u^{\textnormal{P}}(p, a(p))$, while we let the value of the social welfare be $\textnormal{SW} := \max_{a \in \mathcal{A}} \left\{ F_a \cdot r - c_a \right\}$.
%

\subsection{Robust Contracts and Approximate Best Responses}\label{sec:prelim_model_robust}

In this paper, we study a variation of the classical hidden-action principal-agent problem, where the agent plays an action that is an \emph{approximate} best response.
Given $\delta \in (0,1)$, we define the set $A^\delta(p) \subseteq {A}$ of agent's $\delta$-\emph{best responses} in a given contract $p \in \mathbb{R}^{m}_{+}$ as follows:
\begin{equation*}
	A^{\delta}(p) \hspace{-.5mm} \coloneqq \hspace{-.5mm} \left \{ a \in \mathcal{A} \mid F_{a} \hspace{-.5mm}\cdot\hspace{-.5mm} p - c_a \hspace{-.5mm}>\hspace{-.5mm} \max_{a' \in \mathcal{A}} \left\{ F_{a'}\hspace{-.5mm} \cdot\hspace{-.5mm} p - c_{a'} \right\}\hspace{-.5mm} - \hspace{-.5mm}\delta \right\}.
\end{equation*}
%
%
We adopt an adversarial robust approach, in the sense that, whenever the principal commits to a contract $p\in\mathbb{R}^{m}_{+}$, the agent selects a $\delta$-best response that minimizes principal's expected utility, namely an action $a^\delta(p) \in 	A^{\delta}(p)$ such that $a^\delta(p) \in \arg\min_{a \in A^\delta(p)}F_{a}\cdot (r-p)$.
We refer to the utility of a $\delta$-\emph{robust} contract $p$ as $u^\sfP(p,a^\delta(p))$.

Given $\delta \in (0,1)$, the principal's goal is to design an \emph{optimal} $\delta$-robust contract $p^\star \in \Rset$, which is formally defined as:
\begin{equation}\label{eq:delta_opt_def}
	p^\star  \in \argmax_{p \in \Rset} \min_{a \in A^\delta(p)} u^\sfP(p,a) = \argmax_{p \in \Rset} \Psi(p),
\end{equation}
where $\Psi(p) \coloneqq \min_{a \in A^\delta(p)} u^\sfP(p, a)$ denotes the principal's expected utility in a $\delta$-robust contract $p\in \Rset$.
%
%
Notice that analogous ``$\delta$-robust solution concepts'' have been already introduced in similar settings, namely Stackelberg games~\citep{gan2024robust} and Bayesian persuasion~\citep{yang2024computational}.
Our $\delta$-robust contracts are their analogous in the context of hidden-action principal-agent problems.

In the following, given $\delta \in (0,1)$, we denote the expected utility of the principal in an optimal $\delta$-robust contract $p^\star $ as $\textnormal{OPT}(\delta) \coloneqq u^\sfP(p^\star, a^\delta(p^\star)).$
We remark that $\textnormal{OPT}(\delta)$ is always well defined, as an optimal $\delta$-robust contract, according to the definition in \Cref{eq:delta_opt_def}, always exists.
Intuitively, this is due to the strict inequality in the definition of the set $A^\delta(p)$, as observed by~\citet{gan2024robust} for Stackelberg games.
Thus, in order to prove the existence of an optimal $\delta$-robust contract, it is possible to employ the same argument used to prove
Proposition~1 in~\citep{gan2024robust}.
%

\section{The Price of Robustness}\label{sec:characterizations}

We start by providing a characterization of how the value $\textnormal{OPT}(\delta)$ of an optimal $\delta$-robust contract varies as a function of the parameter $\delta \in (0,1)$, which controls the suboptimality of the agent's best response.
The goal of our analysis is to quantify how much is the \emph{price (in terms of utility) that the principal incurs for being robust} to agent's approximate best responses. 
Indeed, the parameter $\delta$ can be interpreted as a measure of the robustness level of the principal's contract, with higher $\delta$ values indicating higher levels of robustness.

We first establish upper and lower bounds that identify a suitable region in which the values of $\textnormal{OPT}(\delta)$ are contained.
Such bounds only depend on the parameter $\delta \in (0,1)$, the value of an optimal non-robust contract $\textnormal{OPT}$, and the social welfare $\textnormal{SW}$ achievable with non-robust contracts.
%
\begin{restatable}[Upper and lower bounds]{proposition}{RealtionNonRobust}\label{pro:properties}
	Given an instance of hidden-action principal-agent problem:
	\begin{enumerate}
		\item For every $\delta \in (0,1)$, it holds:
		\begin{equation*}
			\textnormal{OPT}(\delta) \ge \textnormal{OPT}_{\textnormal{LB}}(\delta) \coloneqq \textnormal{OPT} - 2 \sqrt{\delta} + \delta .
		\end{equation*}
		\item For every $\delta \in (0,1)$, it holds:
		\begin{equation*}
			\textnormal{OPT}(\delta)\le \textnormal{OPT}_{\textnormal{UB}}(\delta) \coloneqq \max\left\{0, \, \textnormal{SW} - \delta \right\}. 
		\end{equation*}
	\end{enumerate}
\end{restatable}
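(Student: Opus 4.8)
The plan is to establish the two bounds independently.

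\smallskip
\emph{Lower bound.} I would turn an optimal non-robust contract into a robust one by sliding its payments toward the reward vector. Let $\bar p$ be an optimal non-robust contract and $\bar a := a(\bar p)$, so that $u^\sfP(\bar p,\bar a)=\textnormal{OPT}$ and $\bar a$ maximizes the agent's utility in $\bar p$. For a parameter $\lambda\in(0,1)$ to be fixed later, set $p_\lambda := (1-\lambda)\bar p+\lambda r\in\Rset$. Using the welfare identity $u^\sfP(p,a)+u^\sfA(p,a)=F_a\cdot r-c_a=:W_a$ one checks the two relations $u^\sfP(p_\lambda,a)=(1-\lambda)\,u^\sfP(\bar p,a)$ and $u^\sfA(p_\lambda,a)=(1-\lambda)\,u^\sfA(\bar p,a)+\lambda W_a$ for every action $a$. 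Now take any $a\in A^\delta(p_\lambda)$; by definition $u^\sfA(p_\lambda,a)>u^\sfA(p_\lambda,\bar a)-\delta$, and substituting the second relation and the welfare identity and then using $u^\sfA(\bar p,a)\le u^\sfA(\bar p,\bar a)$, a short computation reduces this to $u^\sfP(\bar p,a)>\textnormal{OPT}-\delta/\lambda$. Multiplying this strict inequality by $1-\lambda>0$ gives $u^\sfP(p_\lambda,a)>(1-\lambda)(\textnormal{OPT}-\delta/\lambda)=\textnormal{OPT}-\lambda\,\textnormal{OPT}-\delta/\lambda+\delta$, and since $\textnormal{OPT}\le\textnormal{SW}\le 1$ this is at least $\textnormal{OPT}-(\lambda+\delta/\lambda)+\delta$. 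Choosing $\lambda=\sqrt\delta\in(0,1)$ makes $\lambda+\delta/\lambda=2\sqrt\delta$, so $u^\sfP(p_{\sqrt\delta},a)>\textnormal{OPT}-2\sqrt\delta+\delta$ for every $a\in A^\delta(p_{\sqrt\delta})$. Hence $\Psi(p_{\sqrt\delta})\ge\textnormal{OPT}_{\textnormal{LB}}(\delta)$, and therefore $\textnormal{OPT}(\delta)\ge\textnormal{OPT}_{\textnormal{LB}}(\delta)$.

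\smallskip
\emph{Upper bound.} First note that $\textnormal{OPT}(\delta)\ge 0$ always: under the null contract the opt-out action (zero cost, zero reward) is a $\delta$-best response, so the worst-case principal utility is $0$. Thus the claim is trivial unless $\textnormal{OPT}(\delta)>0$, which I now assume. Let $p^\star$ be an optimal $\delta$-robust contract and $a:=a^\delta(p^\star)$, so $\textnormal{OPT}(\delta)=u^\sfP(p^\star,a)=\min_{a'\in A^\delta(p^\star)}u^\sfP(p^\star,a')$. The opt-out action $a_0$ cannot lie in $A^\delta(p^\star)$: it would give $u^\sfP(p^\star,a_0)=-F_{a_0}\cdot p^\star\le 0<\textnormal{OPT}(\delta)$, contradicting minimality. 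Unfolding $a_0\notin A^\delta(p^\star)$ yields $u^\sfA(p^\star,a_0)\le\max_{a'}u^\sfA(p^\star,a')-\delta$, and since $u^\sfA(p^\star,a_0)=F_{a_0}\cdot p^\star\ge 0$ this forces $\max_{a'}u^\sfA(p^\star,a')\ge\delta$. Pick $b\in\arg\max_{a'}u^\sfA(p^\star,a')$; being a best response, $b\in A^\delta(p^\star)$, so $u^\sfP(p^\star,b)\ge\textnormal{OPT}(\delta)$, while $u^\sfA(p^\star,b)\ge\delta$. Adding these two and using the welfare identity once more, $\textnormal{OPT}(\delta)+\delta\le u^\sfP(p^\star,b)+u^\sfA(p^\star,b)=F_b\cdot r-c_b\le\textnormal{SW}$, i.e.\ $\textnormal{OPT}(\delta)\le\textnormal{SW}-\delta\le\textnormal{OPT}_{\textnormal{UB}}(\delta)$.

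\smallskip
\emph{Main obstacle.} The only delicate point is extracting the exact additive term $+\delta$ in the lower bound: one has to multiply through by $1-\lambda$ \emph{before} invoking the crude estimate $\textnormal{OPT}\le 1$ (performing the steps in the other order loses the $+\delta$), and the choice $\lambda=\sqrt\delta$ is dictated by balancing $\lambda+\delta/\lambda$. Everything else is routine manipulation of the identity $u^\sfP(p,a)+u^\sfA(p,a)=F_a\cdot r-c_a$ together with the strict $\delta$-best-response inequality; note in particular that, as emphasized in the introduction, neither bound needs to refer to the inducibility gap of the instance.
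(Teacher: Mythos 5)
Your proof is correct and follows essentially the same route as the paper: for the lower bound, you shift the optimal non-robust contract toward the reward vector as $(1-\sqrt{\delta})\bar p+\sqrt{\delta}\,r$ and bound the principal's loss on every $\delta$-best response; for the upper bound, you observe that $\textnormal{OPT}(\delta)>0$ forces the opt-out action outside $A^\delta(p^\star)$, hence the agent's utility at a best response is at least $\delta$, which you subtract from the welfare. The only cosmetic difference is that you organize the algebra through the welfare identity $u^\sfP+u^\sfA=F_a\cdot r-c_a$, which makes the derivation of the $2\sqrt{\delta}-\delta$ term slightly cleaner than the paper's direct manipulation of the sums.
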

In order to prove the first point in Proposition~\ref{pro:properties}, we show that it is always possible to convert a non-robust contract into a robust one by suitably moving it towards the direction of the principal's reward vector.
In order to prove the second point of Proposition~\ref{pro:properties}, we observe that, if an optimal $\delta$-robust contract $p^\star$ provides the principal with an expected utility strictly larger than zero, then the opt-out action does \emph{not} belong to the set of $\delta$-best responses for such a contract.
Therefore, in an optimal $\delta$-robust contract that provides the principal with an expected utility strictly larger than zero, the agent's utility is at least $\delta > 0$, and thus the principal's expected utility is at most $\textnormal{SW} - \delta$.
%
%

The following proposition shows that the upper and lower bounds in Proposition~\ref{pro:properties} are tight. Formally:
\begin{restatable}{proposition}{RealtionNonRobustTwo}\label{pro:tightness}
	Given $\delta \in (0,1)$, for every integer $n > n(\delta)$, there is an instance of hidden-action principal-agent problem (parametrized by $\delta$) with $2n + 1$ actions, where:
	\begin{equation*}
		\textnormal{OPT}(\delta) -\textnormal{OPT}_{\textnormal{LB}}(\delta) \le \mathcal{O}\left(\frac{1}{n}\right).
	\end{equation*}
	%
	%
	Furthermore, for any $\delta\in(0,1)$, there exists an instance of hidden-action principal-agent problem in which it holds: $$\textnormal{OPT}(\delta)= \textnormal{OPT}_{\textnormal{UB}}(\delta).$$
\end{restatable}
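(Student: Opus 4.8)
The plan is to establish the two statements with separate explicit constructions, in each case pairing the construction with the matching bound from Proposition~\ref{pro:properties}.

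\emph{Tightness of the upper bound.} For the second claim I would exhibit a single instance (independent of $\delta$): two outcomes $\Omega=\{\omega_1,\omega_0\}$ with rewards $r=(1,0)$, and two actions — the opt-out action $a_0$, yielding $\omega_0$ deterministically at cost $0$, and a ``golden'' action $a_1$, yielding $\omega_1$ deterministically at cost $0$. Here $\textnormal{SW}=1$, so $\textnormal{OPT}_{\textnormal{UB}}(\delta)=1-\delta$ for every $\delta\in(0,1)$. Committing to the contract $p$ with $p_{\omega_1}=\delta$, $p_{\omega_0}=0$ gives $u^\sfA(p,a_1)=\delta$ and $u^\sfA(p,a_0)=0$; since the inequality defining $A^\delta(p)$ is strict, $a_0\notin A^\delta(p)$, hence $A^\delta(p)=\{a_1\}$ and $\Psi(p)=u^\sfP(p,a_1)=1-\delta$. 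Thus $\textnormal{OPT}(\delta)\ge 1-\delta$, and together with $\textnormal{OPT}(\delta)\le\textnormal{OPT}_{\textnormal{UB}}(\delta)$ from Proposition~\ref{pro:properties} this forces equality.

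\emph{Tightness of the lower bound.} For the first claim the idea is a ``continuum-of-effort'' family in which the non-robust optimum is essentially $1$ while every $\delta$-robust contract is pinned by a separation-versus-loss tradeoff to value $\approx(1-\sqrt{\delta})^2=1-2\sqrt{\delta}+\delta$. Keep $\Omega=\{\omega_1,\omega_0\}$, $r=(1,0)$, the opt-out action $a_0$, and add $2n$ ``productive'' actions whose probabilities of producing $\omega_1$ form a uniform grid of spacing $\Theta(1/n)$ over $(0,1]$, with small costs of order $\mathcal{O}(1/n)$ (taken strictly convex in the success probability if one wants a strictly positive inducibility gap); the action with success probability $1$ is then a near-golden action, so $\textnormal{OPT}=1-\mathcal{O}(1/n)$ and hence $\textnormal{OPT}_{\textnormal{LB}}(\delta)=(1-\sqrt{\delta})^2-\mathcal{O}(1/n)$. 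The analysis then has two steps: (i) committing to a contract with an $\mathcal{O}(1/n)$ payment on $\omega_1$ induces the near-golden action and witnesses $\textnormal{OPT}=1-\mathcal{O}(1/n)$; and (ii) $\Psi(p)\le(1-\sqrt\delta)^2+\mathcal{O}(1/n)$ for \emph{every} contract $p\in\mathbb{R}^m_+$. For (ii), one first shows that any payment on the zero-reward outcome $\omega_0$ can only decrease $\Psi$, reducing to contracts with $p_{\omega_0}=0$ and $p_{\omega_1}=t$; then the agent's maximum utility is $\approx t$ (attained by the near-golden action), a productive action with success probability $q$ lies in $A^\delta(p)$ iff essentially $q>1-\delta/t$, and since the principal's utility is increasing in $q$ along such a contract, the worst $\delta$-best response has success probability just above $1-\delta/t$, giving $\Psi(p)\approx(1-\delta/t)(1-t)$. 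Maximizing $(1-\delta/t)(1-t)$ over $t\ge\delta$ yields $t=\sqrt\delta$ and value $(1-\sqrt\delta)^2$, with the grid granularity accounting for the $\mathcal{O}(1/n)$ slack; the threshold $n(\delta)$ is there only so that $1/n$ is small relative to $\sqrt\delta$, i.e.\ so the optimizer $t=\sqrt\delta$ is resolved by the grid. Combining with Proposition~\ref{pro:properties} then sandwiches $\textnormal{OPT}(\delta)$ within $\mathcal{O}(1/n)$ of $\textnormal{OPT}_{\textnormal{LB}}(\delta)$.

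\emph{Main obstacle.} The substantive step is (ii): the estimate on $\Psi(p)$ must hold uniformly over \emph{all} $p\in\mathbb{R}^m_+$, not just over the one-parameter family $p=(t,0)$. This requires (a) a clean monotonicity argument that raising the payment on $\omega_0$ cannot help the principal, and (b) a quantitative version of the heuristic ``the worst $\delta$-best response has success probability $\approx 1-\delta/t$'', controlling the overshoot by $\mathcal{O}(1/n)$ and checking that the small ($\mathcal{O}(1/n)$) costs introduced for non-degeneracy perturb the optimal value by only $\mathcal{O}(1/n)$. The other ingredients — the explicit non-robust contract, and the entire upper-bound-tight instance — are routine bookkeeping.
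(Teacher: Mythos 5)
Your proposal is correct and follows essentially the same route as the paper: the upper-bound instance is identical (opt-out plus a deterministic zero-cost action, with payment $\delta$ on the rewarding outcome and the strict inequality in $A^\delta(p)$ doing the work), and the lower-bound instance is the same two-outcome construction with $2n$ productive actions of (near-)zero cost, reduced to contracts supported on the rewarding outcome and analyzed via the $\delta$-best-response threshold $q>1-\delta/t$ and the maximization of $(1-t)(1-\delta/t)$ at $t=\sqrt{\delta}$. The only differences are cosmetic: the paper places the success probabilities at $1-\gamma_i\sqrt{\delta}$ for a sequence $\gamma_i$ concentrating near $1$ rather than on a uniform grid over $(0,1]$, and it invokes a structural result from prior work for the reduction to linear contracts where you sketch the (equally valid) direct monotonicity argument in the payment on the zero-reward outcome.
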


%
%
%
Proposition~\ref{pro:tightness} shows that our upper bound $\textnormal{OPT}_{\textnormal{UB}}(\delta)$ is (strictly) tight, while our lower bound $\textnormal{OPT}_{\textnormal{LB}}(\delta)$ is tight up to an additive term that linearly goes to zero as the number of agent's actions $n$ increases.

Finally, thanks to the previous propositions, we can show the following property about the value of an optimal $\delta$-robust contracts as a function of $\delta \in (0,1)$.
\begin{restatable}{proposition}{RealtionNonRobustThree}\label{pro:properties2}
	The function $(0,1) \ni \delta \mapsto \textnormal{OPT}(\delta)$ is continuous and non-increasing in $\delta$.
	Moreover, $\lim_{\delta \to 0^+} \textnormal{OPT}(\delta)=\textnormal{OPT}$ and $\lim_{\delta \to 1^-} \textnormal{OPT}(\delta)=0$.
	%
	%
\end{restatable}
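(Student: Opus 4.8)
The plan is to deduce all four assertions from the already-established bounds in Proposition~\ref{pro:properties} together with elementary monotonicity and a direct continuity argument. First I would handle \emph{monotonicity}: I claim that if $0 < \delta_1 \le \delta_2 < 1$ then $\textnormal{OPT}(\delta_1) \ge \textnormal{OPT}(\delta_2)$. This is immediate from the nesting $A^{\delta_1}(p) \subseteq A^{\delta_2}(p)$ for every contract $p$, since enlarging the adversary's feasible set can only decrease the inner minimum: $\Psi_{\delta_2}(p) = \min_{a \in A^{\delta_2}(p)} u^\sfP(p,a) \le \min_{a \in A^{\delta_1}(p)} u^\sfP(p,a) = \Psi_{\delta_1}(p)$, and taking the supremum over $p \in \Rset$ on both sides preserves the inequality. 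Hence $\delta \mapsto \textnormal{OPT}(\delta)$ is non-increasing.

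Next I would dispatch the two \emph{limits}. For $\delta \to 1^-$: by Proposition~\ref{pro:properties} point~2, $\textnormal{OPT}(\delta) \le \max\{0, \textnormal{SW} - \delta\}$, and since $\textnormal{SW} \le 1$ (rewards are in $[0,1]$ and costs are nonnegative, so $F_a\cdot r - c_a \le 1$), this upper bound tends to $0$ as $\delta \to 1^-$; combined with $\textnormal{OPT}(\delta) \ge 0$ (the principal can always guarantee $0$ via the all-zero contract, for which the opt-out action is a $\delta$-best response with principal utility $0$, so $\Psi(\vzero) \ge 0$... more carefully, every $\delta$-best response to $\vzero$ has cost $<\delta$, but one should just note $\textnormal{OPT}(\delta)\ge \textnormal{OPT}_{\textnormal{LB}}(\delta)$ is not enough here since that can be negative — instead observe directly that $\textnormal{OPT}(\delta)\ge 0$ because the zero contract yields nonnegative principal utility on every action), we get $\lim_{\delta\to 1^-}\textnormal{OPT}(\delta) = 0$. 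For $\delta \to 0^+$: by monotonicity the limit exists and equals $\sup_{\delta \in (0,1)} \textnormal{OPT}(\delta) \le \textnormal{OPT}$ (any $\delta$-robust contract is in particular a non-robust contract against a weakly larger adversary set that still contains the true best response, hence $\textnormal{OPT}(\delta) \le \textnormal{OPT}$; one can also see this since $a^\delta(p)$ is feasible for the non-robust adversary... I'd state it as: $A^\delta(p) \supseteq A(p)$ so the robust problem is harder, giving $\textnormal{OPT}(\delta)\le\textnormal{OPT}$). The matching lower bound comes from Proposition~\ref{pro:properties} point~1: $\textnormal{OPT}(\delta) \ge \textnormal{OPT} - 2\sqrt{\delta} + \delta \to \textnormal{OPT}$ as $\delta \to 0^+$. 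Squeezing gives the claim.

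Finally, \emph{continuity}. Having established that the function is non-increasing, it can only have jump discontinuities. At any interior point $\delta_0 \in (0,1)$ the left and right limits exist; I would rule out a jump by a sandwich argument using shifted versions of the two bounds. For the right-continuity at $\delta_0$: for $\delta > \delta_0$ we have by a "local" version of the lower-bound construction — here the cleanest route is to apply Proposition~\ref{pro:properties} point~1 \emph{relative to the $\delta_0$-robust problem rather than the non-robust one}, i.e. to show $\textnormal{OPT}(\delta) \ge \textnormal{OPT}(\delta_0) - 2\sqrt{\delta - \delta_0} + (\delta-\delta_0)$, by taking a near-optimal $\delta_0$-robust contract $p$ and pushing its payments toward the reward direction exactly as in the proof of Proposition~\ref{pro:properties}; this yields $\liminf_{\delta \to \delta_0^+}\textnormal{OPT}(\delta) \ge \textnormal{OPT}(\delta_0)$, which with monotonicity forces right-continuity. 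For left-continuity: for $\delta < \delta_0$, I would argue that an optimal $\delta_0$-robust contract $p^\star$, after an arbitrarily small perturbation pushing payments toward $r$, becomes strictly better than all but its intended responses by a margin, so its $\delta$-robust value approaches $\textnormal{OPT}(\delta_0)$ as $\delta \uparrow \delta_0$; hence $\limsup_{\delta\to\delta_0^-}\textnormal{OPT}(\delta)\le \textnormal{OPT}(\delta_0)$, and combined with monotonicity ($\textnormal{OPT}(\delta)\ge\textnormal{OPT}(\delta_0)$ for $\delta<\delta_0$) gives left-continuity. The main obstacle is making this last perturbation argument precise: one must check that the "robustification" operation of Proposition~\ref{pro:properties} degrades the principal's utility continuously in the size of the shift and simultaneously restores a slack of the right order in the approximate-best-response constraints, so that the dependence on the gap $\delta_0 - \delta$ is quantitatively controlled; this is exactly the place where one re-uses the $2\sqrt{\cdot}$-type estimate from the proof of Proposition~\ref{pro:properties} rather than re-deriving it. Everything else is routine.
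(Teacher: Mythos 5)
Your overall route is the same as the paper's: monotonicity from the nesting $A^{\delta_1}(p)\subseteq A^{\delta_2}(p)$, the two limits from the bounds of Proposition~\ref{pro:properties} (plus the easy observations $0\le \textnormal{OPT}(\delta)\le \textnormal{OPT}$, which you are right to make explicit --- the paper leaves them implicit), and continuity from a ``relative'' version of the $\sqrt{\cdot}$-shift robustification. The paper packages that last ingredient as Lemma~\ref{lem:epsilonconvert}: shifting a $\delta$-robust contract toward $r$ by $\sqrt{\epsilon}$ yields a $(\delta+\epsilon)$-robust contract losing at most $2\sqrt{\epsilon}-\epsilon$, hence $\textnormal{OPT}(\delta+\epsilon)\ge \textnormal{OPT}(\delta)-2\sqrt{\epsilon}+\epsilon$ for all $\delta,\epsilon$. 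You correctly identify this as the crux and correctly deploy it for right-continuity.

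Your left-continuity step, however, is inverted as written. Perturbing an optimal $\delta_0$-robust contract and evaluating its $\delta$-robust value for $\delta<\delta_0$ can only produce a \emph{lower} bound on $\textnormal{OPT}(\delta)$, which is already implied by monotonicity and says nothing about $\limsup_{\delta\to\delta_0^-}\textnormal{OPT}(\delta)\le \textnormal{OPT}(\delta_0)$. What you need is the opposite conversion: take a (near-)optimal $\delta$-robust contract for the \emph{smaller} parameter $\delta=\delta_0-\epsilon$ and robustify it into a $\delta_0$-robust one, i.e., apply the very inequality you already derived at the pair $(\delta_0-\epsilon,\delta_0)$ to get $\textnormal{OPT}(\delta_0)\ge \textnormal{OPT}(\delta_0-\epsilon)-2\sqrt{\epsilon}+\epsilon$, whence $\textnormal{OPT}(\delta_0-\epsilon)\le \textnormal{OPT}(\delta_0)+2\sqrt{\epsilon}$. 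This is exactly how the paper argues, and it makes your separate ``small perturbation restores a margin'' discussion unnecessary. With that one-line correction your plan coincides with the paper's proof.
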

Figure~\ref{fig:jumps} shows an example of the region in which the value of $\textnormal{OPT}(\delta)$ is bounded, as defined by the lower and upper bounds in Propositions~\ref{pro:properties}~and~\ref{pro:properties2} as functions of $\delta$.
%
%
%

\begin{figure}[!t]
	\centering
	\resizebox{0.55\linewidth}{!}{\begin{tikzpicture}[scale=0.9,transform shape]
	\begin{axis}[
		domain=0:1.03, 
		samples=100, 
		axis lines=middle,
		xlabel={\small$\hspace{3mm}\delta$},
		xlabel style={below right},
		ylabel={\small$\textnormal{OPT}(\delta)$},
		ylabel style={left},
		ymin=0, ymax=1.2,
		xmin=0, xmax=1.1,
		xtick={0,1},
		ytick={0,0.7,0.9,3},
		yticklabels={0, \small $\textnormal{OPT} = 0.7$, \small $\textnormal{SW} =0.9$},
		width=7.5cm,
		height=5.35cm,
		legend pos=south west
		]
		\addplot [
		name path=A,
		thick,
		blue,
		domain=0:1,
		restrict y to domain=0:0.8
		]{min(0.9 - x, 0.7)};
		
		\addplot [
		name path=B,
		thick,
		blue,
		]{0.7 - 2*sqrt(x) + x};
		
		\addplot [
		fill=blue, 
		opacity=0.1
		] fill between[
		of=A and B,
		];
		
		\addplot [
		dashed,
		gray
		] coordinates {(0, 0.7) (1, 0.7)};
		
		\addplot [
		dashed,
		gray,
		domain=0:1
		]{0.9 - x};
		
	\end{axis}
\end{tikzpicture}}
	\caption{The blue area corresponds to the region in which $\textnormal{OPT}(\delta)$ is bounded as a function of $\delta \in (0,1)$ in an instance of hidden-action principal-agent problem with $\textnormal{SW}=0.9$ and $\textnormal{OPT}=0.7$. 
	}
	\label{fig:jumps}
\end{figure}
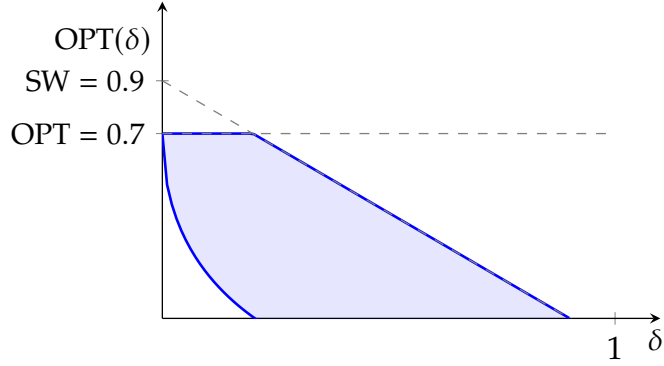

\paragraph{Comparison with Stackelberg games}
The characterization results derived in this section exhibit some crucial differences compared to analogous results derived for Stackelberg games by~\citet{gan2024robust}.
Indeed, in such settings, \citet{gan2024robust}~show that the value of an optimal $\delta$-robust commitment crucially depends on a parameter $\Delta > 0$ that represents the \emph{inducibility gap} of the problem instance.
Intuitively, the inducibility gap encodes how easy it is for the leader to induce the follower to play \emph{any} action; see~\citep{gan2024robust} for a formal definition.
Specifically, in Stackelberg games, the leader's expected utility $\textnormal{OPT}(\delta)$ in an optimal $\delta$-robust commitment is a Lipschitz function lower bounded by $\textnormal{OPT} - \delta / \Delta$ if $\delta < \Delta$, whereas $\textnormal{OPT}(\delta)$ may \emph{not} be even a continuous function if $\delta > \Delta$.
In contrast, in hidden-action principal-agent problems, the value of an optimal $\delta$-robust contract is a continuous function with respect to $\delta \in (0,1)$, regardless of the inducibility gap of the instance.
Furthermore, the value of $\textnormal{OPT}(\delta)$ is either zero or it is upper bounded by $\textnormal{SW} - \delta$, showing that for large values of $\delta$, the maximum utility that the principal can achieve may be particularly small.
Intuitively, this is because the principal must provide the agent with a large expected payment to induce them to take desirable actions rather than the opt-out one.
This upper bound does \emph{not} generally hold in Stackelberg games in which the principal may achieve a large utility even for large values of $\delta$.

\section{Computing Optimal Robust Contracts}\label{sec:computational_problem}

Now, we present our main result: a polynomial-time algorithm to compute an optimal $\delta$-robust contract.
%
%

\subsection{Characterizing an Optimal Contract}

We begin by presenting an optimization problem that characterizes an optimal $\delta$-robust contract. 
Consider an arbitrary optimal $\delta$-robust contract $p^\star$, as well as two arbitrary actions $a^\star \in A(p^\star)$ and $a^\delta \in \arg\min_{a \in A^\delta(p^\star)} u^\sfP(p^\star, a)$.
By fixing $a^\star$ and $a^\delta$, we show that the following optimization problem, over the variable $p \in \mathbb{R}_+^m$, characterizes an optimal $\delta$-robust contract $p^\star$.
\begin{align}
	\label{eq:opt}
	\max_{p\in  \mathbb{R}_+^m}
	\quad  
	u^\sfP(p, a^\delta)
\end{align}
subject to the following \emph{disjunctive} constraints, which must hold for every agent's action $a \in \mathcal{A}$:
\begin{align}
	\hspace{-2mm}\Big( u^\sfA(p, a) \le u^\sfA(p, a^\star) -   \delta  \Big) \vee \Big(u^\sfP(p, a) \ge u^\sfP(p, a^\delta)\Big).\tag{\ref*{eq:opt}a}\label{eq:opt-1}
\end{align}


Intuitively, \Cref{eq:opt-1} requires that each action $a$ is either \emph{not} a $\delta$-best response (first inequality) or no worse than $a^\delta$ for the principal (second inequality). This ensures that the objective function $u^\sfP(p,a^\delta)$ captures the principal's utility in a $\delta$-robust contract $p$.
More formally, we establish the following lemma (recall that $\Psi(p) = \min_{a \in A^\delta(p)} u^\sfP(p, a)$ denotes the principal's utility in a $\delta$-robust contract $p$).


	


	

\begin{lemma}
	\label{lmm:Psi-opt}
	Every optimal solution $p \in \mathbb{R}_+^m$ to Problem~\eqref{eq:opt} is an optimal $\delta$-robust contract, i.e., $\Psi(p) = \Psi(p^\star)$.
\end{lemma}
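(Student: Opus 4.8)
The plan is to sandwich $\Psi(p)$ between $\Psi(p^\star)$ from both sides, using that $p^\star$ is feasible for Problem~\eqref{eq:opt} and that feasibility of any $p$ forces its objective value to lower-bound $\Psi(p)$.

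First I would show that $p^\star$ is feasible for \eqref{eq:opt} and attains objective value exactly $\Psi(p^\star)$, so the optimal value of \eqref{eq:opt} is at least $\Psi(p^\star)$. To verify feasibility, fix an arbitrary action $a \in \mathcal{A}$ and distinguish two cases. If $a \notin A^\delta(p^\star)$, then by definition of the $\delta$-best-response set $u^\sfA(p^\star,a) \le \max_{a' \in \mathcal{A}} u^\sfA(p^\star,a') - \delta = u^\sfA(p^\star,a^\star) - \delta$, where the last equality uses $a^\star \in A(p^\star)$; hence the first disjunct of \eqref{eq:opt-1} holds. If instead $a \in A^\delta(p^\star)$, then since $a^\delta \in \arg\min_{a' \in A^\delta(p^\star)} u^\sfP(p^\star, a')$ we have $u^\sfP(p^\star,a) \ge u^\sfP(p^\star,a^\delta)$, so the second disjunct holds. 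Thus $p^\star$ is feasible, and its objective value is $u^\sfP(p^\star,a^\delta) = \min_{a \in A^\delta(p^\star)} u^\sfP(p^\star,a) = \Psi(p^\star)$ by the choice of $a^\delta$.

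Next I would prove that every feasible $p$ of \eqref{eq:opt} satisfies $\Psi(p) \ge u^\sfP(p,a^\delta)$; equivalently, $u^\sfP(p,a) \ge u^\sfP(p,a^\delta)$ for every $a \in A^\delta(p)$. Take such an $a$ and look at the disjunction \eqref{eq:opt-1}. The first disjunct would give $u^\sfA(p,a) \le u^\sfA(p,a^\star) - \delta \le \max_{a' \in \mathcal{A}} u^\sfA(p,a') - \delta$, which contradicts $a \in A^\delta(p)$, since membership in $A^\delta(p)$ requires the \emph{strict} inequality $u^\sfA(p,a) > \max_{a' \in \mathcal{A}} u^\sfA(p,a') - \delta$. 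Hence the second disjunct must hold, i.e., $u^\sfP(p,a) \ge u^\sfP(p,a^\delta)$, and therefore $\Psi(p) = \min_{a \in A^\delta(p)} u^\sfP(p,a) \ge u^\sfP(p,a^\delta)$ (the set $A^\delta(p)$ is nonempty, so $\Psi(p)$ is well defined).

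Finally, let $p$ be an optimal solution of \eqref{eq:opt}. Combining the two steps, $\Psi(p) \ge u^\sfP(p,a^\delta) \ge \Psi(p^\star)$, where the second inequality is because the optimal value of \eqref{eq:opt} is at least $\Psi(p^\star)$. Conversely, $\Psi(p) \le \Psi(p^\star) = \textnormal{OPT}(\delta)$ because $p^\star$ is an optimal $\delta$-robust contract. Hence $\Psi(p) = \Psi(p^\star)$, so $p$ is itself an optimal $\delta$-robust contract. The only delicate point is the interplay between the \emph{strict} inequality defining $A^\delta(\cdot)$ and the \emph{non-strict} inequality in the first disjunct of \eqref{eq:opt-1}: this is precisely what makes the second step work and what guarantees that the disjunctive formulation faithfully encodes the agent playing the worst $\delta$-best response; everything else is bookkeeping with the definitions of $u^\sfA$, $u^\sfP$, $A(\cdot)$, and $A^\delta(\cdot)$.
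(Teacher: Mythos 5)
Your proof is correct and follows essentially the same route as the paper's: establish feasibility of $p^\star$ with objective value $\Psi(p^\star)$, show that feasibility of any $p$ forces $u^\sfP(p,a^\delta)\le\Psi(p)$ via the strict inequality in the definition of $A^\delta(p)$ ruling out the first disjunct, and then chain the inequalities. The handling of the strict/non-strict inequality interplay matches the paper's key step exactly.
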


\begin{proof}
	First, observe that $p^\star$ is a feasible solution to Problem~\eqref{eq:opt}.
	Indeed, if an action $a$ is {\em not} a $\delta$-best response to $p^\star$, then by definition this means $u^\sfA(p^\star, a) \le u^\sfA(p^\star, a^\star) - \delta$; otherwise, it must be that $u^\sfP(p^\star, a) \ge u^\sfP(p^\star, a^\delta)$ as $a^\delta$ is by definition the worst $\delta$-best response in $p^\star$.
	As a result, \Cref{eq:opt-1} holds for $p^\star$ for every $a \in \mathcal{A}$.
	
	Furthermore, notice that, according to the definition of $a^\delta$:
	\[
	\Psi(p^\star) = \min_{a \in A_\delta(p^\star)} u^\sfP(p^\star, a) = u^\sfP(p^\star, a^\delta).
	\]
	
	In order to complete the proof, it is sufficient to show that $u^\sfP(p, a^\delta) \le \Psi(p)$ for every feasible solution $p\in \mathbb{R}_+^m$ to Problem~\eqref{eq:opt}.
	Indeed, if the condition above holds, for any arbitrary optimal solution $p'\in \mathbb{R}_+^m$ to Problem~\eqref{eq:opt}, we have:
	\begin{align*}
		\Psi(p^\star)
		= u^\sfP(p^\star, a^\delta) 
		\le u^\sfP(p', a^\delta)
		\le \Psi(p') 
		\le \Psi(p^\star),
	\end{align*}
	where $u^\sfP(p^\star, a^\delta) \le u^\sfP(p', a^\delta)$ since $p^\star$ is a feasible solution to Problem~\eqref{eq:opt} and $p'$ is optimal,
	while $\Psi(p') \le \Psi(p^\star)$ since $p^\star$ is an optimal $\delta$-robust contract by definition.
	Then, it must be the case that $\Psi(p') = \Psi(p^\star)$.
	
	Now, to complete the proof, consider any feasible solution $p\in \mathbb{R}_+^m$ to Problem~\eqref{eq:opt}. We show that $u^\sfP(p, a^\delta) \le \Psi(p)$.
	%
	Pick an arbitrary best-response action $a' \in A(p)$ and consider any $a \in A^\delta(p)$.
	By definition,
	\begin{align*}
		u^\sfA(p, a) > u^\sfA(p, a') - \delta = \max_{a \in \Actions} u^\sfA(p, a) - \delta 
		\ge u^\sfA(p, a^\star) - \delta.
	\end{align*}
	Hence, for \Cref{eq:opt-1} to hold for action $a$, it must be that $u^\sfP(p, a) \ge u^\sfP(p, a^\delta)$.
	Since the choice of $a$ is arbitrary, this holds for every $a \in A^\delta(p)$.
	Consequently,
	\[
	\Psi(p) = \min_{a \in A^\delta(p)} u^\sfP(p, a) \ge u^\sfP(p, a^\delta).
	\qedhere
	\]
\end{proof}

The above lemma implies that we can effectively ``guess'' $a^\star$ and $a^\delta$, fixing these actions in Problem~\eqref{eq:opt} and solving the optimization problem to obtain $p^\star$ (or possibly a different, but still optimal, $\delta$-robust contract).
Since there are only $O(n^2)$ possible combinations of the values of $a^\star$ and $a^\delta$, the approach is efficient as long as Problem~\eqref{eq:opt} can be solved efficiently.
A correct guess yields a contract $p$ such that $\Psi(p) = \Psi(p^\star)$, whereas $\Psi(p) \le \Psi(p^\star)$ for incorrect guesses.
Thus, by comparing the $\Psi$ values, we can identify a correct guess and a corresponding optimal $\delta$-robust contract.

It remains to show how to efficiently solve Problem~\eqref{eq:opt}.

\subsection{Solving Problem~\eqref{eq:opt}}

Problem~\eqref{eq:opt} does \emph{not} directly admit any efficient solution algorithm due to the non-convex constraint in~\Cref{eq:opt-1}.
To deal with this issue, we rewrite \Cref{eq:opt-1} as follows:
\begin{align}
	\Big( F_a \cdot p \le c_a + u^\sfA(p, a^\star) - \delta \Big) \;\vee  \;  \Big( F_a \cdot p \le F_a \cdot r - u^\sfP(p, a^\delta) \Big),
	\label{eq:opt-1-re}
\end{align}
by expanding the utilities as $u^\sfA(p, a) = F_a \cdot p - c_a$ and $u^\sfP(p, a) = F_a \cdot (r - p)$ and rearranging the terms.

Hence, \Cref{eq:opt-1} is satisfied for action $a \in \mathcal{A}$ if and only if $F_a \cdot p$ is smaller than the maximum of the right-hand sides of the two inequalities in~\Cref{eq:opt-1-re}.
The constraint effectively reduces to the second inequality for all $p \in \Rset$ such that 
\begin{align}
	\label{eq:right-sides}
	c_a + u^\sfA(p, a^\star) - \delta \le F_a \cdot r - u^\sfP(p, a^\delta),
\end{align} 
while for the other $p\in \Rset$, it reduces to the first inequality.

Consequently, we can partition the contract space based on the satisfiability of \Cref{eq:right-sides}, considering all $a \in \mathcal{A}$.
Within each subspace in the partition, only one inequality in \Cref{eq:opt-1-re} is active for every $a \in \mathcal{A}$. So, effectively, \Cref{eq:opt-1-re} reduces to a linear constraint for each action $a$, and the optimization problem to an LP.
It then suffices to solve an LP for every subspace, each generating an optimal contract within its corresponding subspace.
Among these contracts, the one providing the highest utility is an optimal solution to Problem~\eqref{eq:opt}.

Now, if the linear inequalities in \Cref{eq:right-sides} (one for each action $a \in \mathcal{A}$) were $n$ arbitrary inequalities, the above partition may consist of exponentially many subspaces, making the approach inefficient. 
Fortunately, the partition is much more well-structured, since the hyperplanes corresponding to the inequalities are {\em parallel}, as the coefficients of $p$ in \Cref{eq:right-sides} are invariant with respect to $a$.
As a result, they partition the space into only $O(n)$ subspaces.

Next, we show how to exploit the above observation to solve Problem~\eqref{eq:opt}, by solving $O(n)$ suitable subproblems instead.

\subsection{Formulating the Subproblems}

Let us first rearrange \Cref{eq:right-sides} as follows:
\begin{align}
	\label{eq:right-sides-re}
	u^\sfA(p, a^\star) + u^\sfP(p, a^\delta) - \delta \le \nu_a \coloneqq F_a \cdot r - c_a, 
\end{align} 
where $\nu_a$ is exactly the social welfare generated by action $a $ (which is independent of the specific contract adopted).

Then, we can re-order agent's actions $a_1, \dots, a_n$ in such a way that $\nu_{a_1} \le \nu_{a_2} \le \dots \le \nu_{a_n}$. For simplicity, we write $\nu_j = \nu_{a_j}$, and we let $\nu_0 = -\infty$ and $\nu_{n+1} = +\infty$.
Then, the following lemma is straightforward.

\begin{lemma}
	\label{lmm:right-sides}
	For every contract $p \in \mathbb{R}^{m}_{+}$, if it holds that $\nu_{j-1} \leq u^\sfA(p, a^\star) + u^\sfP(p, a^\delta) - \delta \leq \nu_{j}$, then:
	%
	\begin{itemize}
		\item $F_a \cdot p \le F_a \cdot r - u^\sfP(p, a^\delta) \Longleftrightarrow$ \Cref{eq:opt-1-re} holds for all actions $a  \in \{a_\ell \mid \ell \le j - 1\}$; and 
		\item $F_a \cdot p \le c_a + u^\sfA(p, a^\star) - \delta \Longleftrightarrow$ \Cref{eq:opt-1-re} holds for all actions $a  \in \{a_\ell \mid j \le \ell \}$.
	\end{itemize}
\end{lemma}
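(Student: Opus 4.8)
The statement is essentially a bookkeeping consequence of how the actions were sorted by their social welfare values $\nu_a$, together with the equivalence established just before the lemma: \Cref{eq:opt-1-re} reduces to its second inequality precisely when \Cref{eq:right-sides} (equivalently \Cref{eq:right-sides-re}) holds for action $a$. So the plan is to fix a contract $p$ and let $s(p) \coloneqq u^\sfA(p, a^\star) + u^\sfP(p, a^\delta) - \delta$ denote the common left-hand side of \Cref{eq:right-sides-re}; by assumption $\nu_{j-1} \le s(p) \le \nu_j$. The entire argument then hinges on comparing $s(p)$ against $\nu_{a_\ell}$ for each action $a_\ell$.

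First I would treat the case $\ell \le j-1$. Since the actions are ordered so that $\nu_{a_1} \le \dots \le \nu_{a_n}$, we have $\nu_{a_\ell} \le \nu_{a_{j-1}} = \nu_{j-1} \le s(p)$, i.e. $s(p) \ge \nu_{a_\ell}$. Rearranging, this is exactly the statement that \Cref{eq:right-sides} fails to hold \emph{strictly} — more precisely, $s(p) \ge \nu_{a_\ell}$ means $c_{a_\ell} + u^\sfA(p,a^\star) - \delta \ge F_{a_\ell}\cdot r - u^\sfP(p,a^\delta)$, so the right-hand side of the second inequality in \Cref{eq:opt-1-re} dominates that of the first (or ties it). Hence the maximum of the two right-hand sides is achieved by the second one, and \Cref{eq:opt-1-re} holds for $a_\ell$ if and only if its second inequality $F_{a_\ell}\cdot p \le F_{a_\ell}\cdot r - u^\sfP(p, a^\delta)$ holds. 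That is precisely the first bullet. Symmetrically, for $\ell \ge j$ we have $\nu_{a_\ell} \ge \nu_{a_j} = \nu_j \ge s(p)$, so now the first right-hand side in \Cref{eq:opt-1-re} dominates, and \Cref{eq:opt-1-re} holds for $a_\ell$ if and only if $F_{a_\ell}\cdot p \le c_{a_\ell} + u^\sfA(p, a^\star) - \delta$, giving the second bullet.

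The one point requiring a little care is the boundary, where $s(p)$ could equal $\nu_{a_\ell}$ exactly (so both right-hand sides in \Cref{eq:opt-1-re} coincide): then either inequality being satisfied is equivalent to the other, so the stated equivalences still hold and the tie can be assigned to either bullet without harm — I would note this explicitly so the ``$\Longleftrightarrow$'' is unambiguous. The degenerate conventions $\nu_0 = -\infty$ and $\nu_{n+1} = +\infty$ handle the edge cases $j=1$ (no actions in the first group) and $j = n+1$ (no actions in the second group) automatically, since then one of the two index ranges is empty and the corresponding bullet is vacuous. I do not anticipate any genuine obstacle here; the only thing to get right is translating the chain of inequalities $\nu_{a_\ell} \le \nu_{j-1} \le s(p)$ (resp. $\ge$) back into the comparison of the two right-hand sides of \Cref{eq:opt-1-re}, which is just the rearrangement already performed to pass from \Cref{eq:right-sides} to \Cref{eq:right-sides-re}.
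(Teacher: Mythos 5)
Your overall strategy---set $s(p) \coloneqq u^\sfA(p,a^\star)+u^\sfP(p,a^\delta)-\delta$, compare it with $\nu_{a_\ell}$ using the ordering of the actions, decide which right-hand side in \Cref{eq:opt-1-re} is larger, and reduce the disjunction to the inequality whose right-hand side is the maximum---is exactly the intended argument (the paper itself omits the proof as ``straightforward''). However, your execution contains a direction error that turns both key steps into non sequiturs. For $\ell\le j-1$ you correctly derive $c_{a_\ell}+u^\sfA(p,a^\star)-\delta \ge F_{a_\ell}\cdot r - u^\sfP(p,a^\delta)$, i.e.\ the right-hand side of the \emph{first} inequality in \Cref{eq:opt-1-re} dominates that of the second; yet in the next clause you assert that ``the maximum of the two right-hand sides is achieved by the second one'' and conclude that the disjunction is equivalent to the second inequality. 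The correct conclusion from your own displayed inequality is the opposite: for $\ell\le j-1$ the disjunction is equivalent to $F_{a_\ell}\cdot p \le c_{a_\ell}+u^\sfA(p,a^\star)-\delta$, and symmetrically for $\ell\ge j$ (where $\nu_{a_\ell}\ge s(p)$, so the second right-hand side dominates) it is equivalent to $F_{a_\ell}\cdot p\le F_{a_\ell}\cdot r-u^\sfP(p,a^\delta)$. Only the implication ``single inequality $\Rightarrow$ disjunction'' is unconditional; the converse holds only for the weaker of the two inequalities.

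Carried out correctly, your computation thus proves the lemma with the two bullets interchanged, which indicates that the statement as printed has the inequalities swapped; the corrected assignment is the one consistent with the paper's preceding text, which says the constraint ``reduces to the second inequality'' exactly when \Cref{eq:right-sides} holds, i.e.\ when $\nu_a\ge s(p)$, which in the $j$-th region is the case for the high-welfare actions $a_j,\dots,a_n$. A concrete counterexample to the printed first bullet is the opt-out action $a$ (for which $\nu_a=0\le s(p)$ whenever the principal's utility is positive, so $\ell\le j-1$): the disjunction holds for it via the first inequality, while $F_a\cdot p\le F_a\cdot r-u^\sfP(p,a^\delta)=-u^\sfP(p,a^\delta)<0$ fails. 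You should not bend your (correct) chain of inequalities to match the printed bullets: state the corrected assignment, and note that the constraints \Cref{eq:lp-nu-j-2,eq:lp-nu-j-3} in Problem~\eqref{eq:lp-nu-j-const} must be swapped accordingly, since imposing the stronger of the two inequalities on each group would cut off optimal solutions of Problem~\eqref{eq:opt}.
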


In other words, the condition in the lemma defines a suitable subspace of contracts $\mathcal{P}_j$ for each $j \in \{1,\dots,n+1\}$. 
The following LP solves for a $p \in \Rset$ that is optimal within $\mathcal{P}_j$. 
\begin{align}
	\label{eq:lp-nu-j}
	\max_{p \in \mathbb{R}_+^m}
	\quad
	u^\sfP(p, a^\delta)
	\tag{\ref*{eq:lp-nu-j-const}}
\end{align}
subject to the following constraints:
\begin{subequations}
	\label{eq:lp-nu-j-const}
	\begin{align}
		&
		\nu_{j-1} \le u^\sfA(p, a^\star) + u^\sfP(p, a^\delta) - \delta \le  \nu_j \label{eq:lp-nu-j-1} \\
		&
		F_a {\cdot} \, p \le F_a \cdot r - u^\sfP(p, a^\delta)
		\quad \forall a \in \{a_\ell \mid  \ell \le j{-}1\}  \label{eq:lp-nu-j-2} \\
		&
		F_a {\cdot} \, p \le c_a + u^\sfA(p, a^\star) - \delta
		\quad\quad \forall a \in \{a_\ell \mid j \le \ell \}. \label{eq:lp-nu-j-3}
	\end{align}
\end{subequations}
Since $\bigcup_{j=1}^{n+1} \mathcal{P}_j = \mathbb{R}^m_+$, solving the LP in Problem~\eqref{eq:lp-nu-j-const} for all $j \in \{1,\dots,n+1\}$ and picking the best among the obtained solutions gives an optimal solution to Problem~\eqref{eq:opt}.

\begin{remark}
	The left-hand side of \Cref{eq:right-sides-re} is roughly (up to a $\delta$ difference) the social welfare of an optimal contract.
	Thus, \Cref{lmm:right-sides} can be interpreted as follows.
	For low-social-welfare actions $a_1,\dots, a_{j-1}$, yielding a sufficiently high utility for the principal automatically provides a low utility for the agent.
	This fulfills the first inequality in \Cref{eq:opt-1-re} (and \Cref{eq:opt-1}).
	Conversely, for high-social-welfare actions $a_j,\dots, a_n$, a sufficiently low utility for the agent automatically gives a high utility for the principal.
	This fulfills the second inequality in \Cref{eq:opt-1-re} (and \Cref{eq:opt-1}).
\end{remark}

\begin{algorithm}[!htp]
	\caption{Compute an optimal $\delta$-robust contract}
	\label{alg:robust_poly}
	\begin{algorithmic}[1]
		\State $p^\star \leftarrow \texttt{null}$,\; 
		$\psi^\star \leftarrow -\infty$
		\ForAll{$(a^\star,a^\delta) \in \mathcal{A} \times \mathcal{A}$} \label{ln:inner-for}
		\ForAll{$j=1,\dots,n+1$}
		\State 
		Solve Problem~\eqref{eq:lp-nu-j-const} instantiated with $(a^\star,a^\delta)$ s.t. \Cref{eq:lp-nu-j-1,eq:lp-nu-j-2,eq:lp-nu-j-3},
		and let an optimal solution be $p'$
		
		\If{$\Psi(p') > \psi^\star$}
		\State $p^\star \leftarrow p'$ 
		\State $\psi^\star \leftarrow \Psi(p')$
		\EndIf
		
		\label{ln:inner-for-end}
		\EndFor
		\EndFor
		\State {\bf return} $p^\star$
	\end{algorithmic}
\end{algorithm}
%

\subsection{Putting All Together}

We summarize our results in this section into \Cref{alg:robust_poly} and the following main theorem.

\begin{theorem}
	\Cref{alg:robust_poly} computes an optimal $\delta$-robust contract in polynomial time.
\end{theorem}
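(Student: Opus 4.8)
The plan is to derive the theorem by assembling the structural facts already established---most notably \Cref{lmm:Psi-opt} and \Cref{lmm:right-sides}---and then to bound the number of elementary operations. The argument naturally splits into a correctness part and a running-time part.

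For correctness, I would proceed in three steps. \emph{Step 1 (guessing the action pair).} Fix a genuine optimal $\delta$-robust contract $\hat p$, and let $\hat a^\star \in A(\hat p)$ and $\hat a^\delta \in \arg\min_{a \in A^\delta(\hat p)} u^\sfP(\hat p, a)$; the outer loop of \Cref{alg:robust_poly} eventually reaches this pair $(\hat a^\star, \hat a^\delta)$. By \Cref{lmm:Psi-opt}, instantiating Problem~\eqref{eq:opt} with $(\hat a^\star, \hat a^\delta)$ yields an optimization problem whose every optimal solution $p$ is an optimal $\delta$-robust contract, i.e.\ $\Psi(p) = \Psi(\hat p) = \textnormal{OPT}(\delta)$; moreover it is feasible, as witnessed by $\hat p$ itself. \emph{Step 2 (from the disjunctive program to $n+1$ LPs).} For a fixed pair $(a^\star, a^\delta)$, I would show that solving the $n+1$ LPs~\eqref{eq:lp-nu-j-const} for $j = 1, \dots, n+1$ and keeping the best feasible solution produces an optimal solution of Problem~\eqref{eq:opt}. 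Writing $\mathcal{P}_j \coloneqq \{ p \in \Rset : \nu_{j-1} \le u^\sfA(p, a^\star) + u^\sfP(p, a^\delta) - \delta \le \nu_j \}$, \Cref{lmm:right-sides} says that on $\mathcal{P}_j$ each disjunctive constraint~\eqref{eq:opt-1-re} is equivalent to exactly one of its two linear sides---the ones appearing in~\eqref{eq:lp-nu-j-2} and~\eqref{eq:lp-nu-j-3}---so LP~\eqref{eq:lp-nu-j-const} is precisely Problem~\eqref{eq:opt} with the extra restriction $p \in \mathcal{P}_j$. Since $\nu_0 = -\infty \le \nu_1 \le \dots \le \nu_n \le \nu_{n+1} = +\infty$ and $u^\sfA(p, a^\star) + u^\sfP(p, a^\delta) - \delta$ is a finite real number for every $p \in \Rset$, we get $\bigcup_{j=1}^{n+1} \mathcal{P}_j = \Rset$, hence the best of the $n+1$ LP optima equals the optimum of Problem~\eqref{eq:opt}. \emph{Step 3 (combining).} Every contract $p'$ that the algorithm records is an optimal solution of some instantiated LP~\eqref{eq:lp-nu-j-const}, so in particular $p' \in \Rset$ and therefore $\Psi(p') \le \max_{p \in \Rset} \Psi(p) = \Psi(\hat p) = \textnormal{OPT}(\delta)$; and by Steps 1 and 2 the iteration with the pair $(\hat a^\star, \hat a^\delta)$ records some $p'$ with $\Psi(p') = \textnormal{OPT}(\delta)$. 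Since \Cref{alg:robust_poly} returns the recorded contract with the largest value of $\Psi$, its output $p^\star$ satisfies $\Psi(p^\star) = \textnormal{OPT}(\delta)$, i.e.\ it is an optimal $\delta$-robust contract.

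For the running time, each LP~\eqref{eq:lp-nu-j-const} has $m$ variables and $O(n)$ constraints (the two-sided bound~\eqref{eq:lp-nu-j-1}, one inequality per action in~\eqref{eq:lp-nu-j-2} and~\eqref{eq:lp-nu-j-3}, and nonnegativity of $p$), with coefficients that are explicit polynomial-size expressions in the instance data and in $\delta$, and it is bounded above (its objective equals $F_{a^\delta}\cdot r - F_{a^\delta}\cdot p \le 1$ on $\Rset$), so it is solvable exactly in polynomial time. The algorithm solves $|\mathcal{A}|^2 (n+1) = O(n^3)$ such LPs, and for each recorded candidate $p'$ it evaluates $\Psi(p')$ by first forming $A^\delta(p')$ (computing $u^\sfA(p', a)$ for all $a$, taking the maximum, and thresholding---altogether $O(nm)$ work) and then minimizing $u^\sfP(p', \cdot)$ over $A^\delta(p')$; all of this is polynomial, so \Cref{alg:robust_poly} runs in polynomial time. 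I expect the step needing the most care---already isolated as \Cref{lmm:right-sides}---to be the parallel-hyperplane observation: the surfaces $\{ p : u^\sfA(p, a^\star) + u^\sfP(p, a^\delta) - \delta = \nu_a \}$ all share the normal direction $F_{a^\star} - F_{a^\delta}$, independent of $a$, so the $n$ of them carve $\Rset$ into only $n+1$ slabs $\mathcal{P}_1, \dots, \mathcal{P}_{n+1}$ rather than exponentially many cells, and on each slab the active side of every disjunction is pinned down; everything else (feasibility for the correct guess, boundedness, the comparison of $\Psi$ values, and the counting) is routine given the earlier lemmas.
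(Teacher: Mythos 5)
Your proposal is correct and follows essentially the same route as the paper's own argument: guess the pair $(a^\star, a^\delta)$, invoke \Cref{lmm:Psi-opt} to reduce to Problem~\eqref{eq:opt}, use the parallel-hyperplane structure from \Cref{lmm:right-sides} to split into the $n+1$ LPs~\eqref{eq:lp-nu-j-const}, and select the best candidate by comparing $\Psi$ values. Your write-up is more detailed than the paper's proof sketch (in particular on the covering $\bigcup_j \mathcal{P}_j = \Rset$ and on LP boundedness), but there is no substantive difference in approach.
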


\begin{proof}[Proof Sketch]
	The polynomial runtime of \Cref{alg:robust_poly} is obvious as it enumerates $O(n^3)$ value combinations and solves an LP for each of them.
	To see the correctness of the algorithm, note that the inner for-loop of \Cref{alg:robust_poly} effectively solves Problem~\eqref{eq:opt}, for the pair $(a^\star, a^\delta)$ enumerated in the outer for-loop.
	Now, if $a^\star$ and $a^\delta$ happen to be the agent's exact and $\delta$-best responses, respectively, under some optimal contract, then according to \Cref{lmm:Psi-opt}, the outer loop produces an optimal contract $p'$, with $\Psi(p') \ge \Psi(p)$ for all $p \in \mathbb{R}_+^m$. By comparing the $\Psi$ values, the algorithm identifies and outputs such an optimal contract.
\end{proof}


\section{Learning Robust Contracts}\label{sec:learning_problem}

The algorithm in Section~\ref{sec:computational_problem} works under the assumption that the principal has full knowledge of all the payoff-relevant information about the agent and, thus, they can compute an optimal $\delta$-robust contract.
%
%
In this section, we address the case in which the principal has no such knowledge, and, thus, they have to learn an optimal $\delta$-robust contract in an online fashion, by repeatedly interacting with the agent.
%

\subsection{Learning Interaction}\label{sec:learning_interaction}

We consider an online learning framework similar to the one studied by~\citet{Zhu2023Sample}, in which the features of agent's actions, \emph{i.e.}, costs and probabilities over outcomes, depend on an agent's \emph{type} that is sampled at each round from some (fixed) unknown probability distribution.
Before formally defining the online learning framework, we introduce some notation that is needed to deal with hidden-action principal-agent problems in which the agent can be of different types.

%

\paragraph{Hidden-action principal-agent problems with types}
We let $\Theta$ be the finite set of possible agent's {types}, while $F_{\theta,a}$ and $c_{\theta,a}$ denote the probability distribution over outcomes and the cost, respectively, of action $a \in A$, when the agent has type $\theta \in \Theta$.
The agent's type is drawn from an unknown probability distribution $\lambda \in \Delta_\Theta$, with $\lambda_\theta \in [0,1]$ being the probability of type $\theta \in \Theta$.
We extend all the notation introduced in Section~\ref{sec:preliminaries} to account for agent's types.
Specifically, given $\delta \in (0,1)$, we let $A^{\theta,\delta}(p) \subseteq \mathcal{A}$ be the set of $\delta$-best responses to a contract $p \in \mathbb{R}_+^m$ for type $\theta \in \Theta$.
Moreover, we denote by $a^{\theta,\delta}(p)$ the $\delta$-best response that is played by the agent under a $\delta$-robust contract, namely the worst one for the principal.
Formally, $a^{\theta,\delta}(p) \in \arg\min_{a \in A^\delta (p)} F_{\theta,a} \cdot (r-p)$.
Similarly, we use $A^{\theta}(p)$ and $a^{\theta}(p)$ for (exact) best responses.
Finally, the principal's expected utility when committing to a contract $p \in \Rset$ against an agent of type $\theta \in \Theta$ playing an action $a \in \mathcal{A}$ is $u^\sfP(p,a,\theta) \coloneqq F_{\theta, a} \cdot (r - p)$.

We study an online learning framework in which the principal and the agent repeatedly interact over $T > 0$ rounds.
Each round involves a repetition of the same instance of hidden-action principal-agent problem, with only the agent's type changing from round to round. 
Specifically, at each round $t \in [T]$, the principal-agent interaction is as follows:
\begin{enumerate}
	\item The agent's type $\theta^t \in \Theta$ is sampled from the distribution $\lambda$. Notice that $\theta^t$ is \emph{not} observed by the principal.
	\item The principal commits to a contract $p^t \in \mathcal{C} \coloneqq [0,1]^m$.
	\item After observing the contract $p^t$, the agent plays a $\delta$-best response $a^t\coloneqq a^{\theta^t,\delta}(p^t)$.
	Notice that the action $a^{t}$ is \emph{not} observed by the principal. 
	%
	%
	%
	\item The principal observes the outcome $\omega^t \sim {F}_{a^{t}}$ that is realized as an effect of the agent's action $a^t$.
\end{enumerate}
We remark that, in the interaction described above, the principal's contract design space is assumed to be limited to the hypercube $\mathcal{C} \coloneqq [0,1]^m$.
Restricting the contract space to a bounded set is standard in the literature on online learning in contract design (see, \emph{e.g.},~\cite{ho2015adaptive,Zhu2023Sample,chen2024bounded}) and it is motivated by the negative result proved in Theorem~1 in~\citep{bacchiocchilearning}.
Therefore, we need to introduce a formal definition of the principal's expected utility when committing to an optimal $\delta$-robust contract restricted to such a space.
Formally:
$$
%
\textnormal{OPT}(\mathcal{C}, \delta) := \max_{p \in \mathcal{C}} \sum_{\theta \in \Theta} \, \lambda_\theta u^\sfP(p, a^{\theta,\delta}(p), \theta).
$$

%
%
%
The goal of the principal is to minimize the \emph{(cumulative) pseudo-regret}, or simply \emph{regret}, which can be defined as:
\begin{align*}
	R_T(\mathcal{C},\delta) \coloneqq  T \, \textnormal{OPT}(\mathcal{C},{\delta})-   \mathbb{E} \left[\hspace{-0.2mm}\sum_{t \in [T],\theta \in \Theta} \lambda_\theta u^\sfP(p^t, a^{\theta,\delta}(p^t),\theta)  \right] ,
\end{align*}
where the expectation is over the randomness of the learning algorithm.
%
%
Our goal is to design a \emph{no-regret} algorithm for the principal, namely one achieving $R_T(\mathcal{C}, \delta) = o(T)$.



\subsection{A No-Regret Algorithm}\label{sec:no_regret}

Before introducing our no-regret learning algorithm (Algorithm~\ref{alg:no_regret}), we need to prove a key lemma about the ``continuity'' of the principal's expected utility over the hypercube.
In particular, given a $\delta$-robust contract $p \in \Rset$, we show that it is possible to build another contract $p'\in  \Rset$ that is $(\delta + \epsilon)$-robust and provides the principal with utility at most $2\sqrt{\epsilon}$ worse than the one of contract $p$. Formally:
\begin{restatable}{lemma}{epsilonconvert}\label{lem:epsilonconvert}
	Given any $\delta,\epsilon \in(0,1)$ and a contract $p \in \Rset$, the following holds for every $\theta \in \Theta$:
	$$ u^\sfP(p', a^{\theta, \delta+\epsilon}(p'),\theta) \ge u^\sfP(p,a^{\theta,\delta} (p),\theta ) - 2 \sqrt{\epsilon},$$
	where $p' \coloneqq (1-\sqrt \epsilon)p + \sqrt \epsilon r$.
\end{restatable}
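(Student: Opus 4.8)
The plan is to work with a fixed type $\theta$ and the contract $p' \coloneqq (1-\sqrt\epsilon)p + \sqrt\epsilon r$, and to proceed in two steps: first show that moving from $p$ to $p'$ does not ``lose'' any action from the set of relevant $\delta$-responses (so that the worst $(\delta+\epsilon)$-best response to $p'$ is still controlled), and second bound the drop in principal's utility caused by the scaling. The key algebraic observation is that for any action $a$, since $F_{\theta,a}$ is a probability distribution, $F_{\theta,a}\cdot p' = (1-\sqrt\epsilon)\,F_{\theta,a}\cdot p + \sqrt\epsilon\,F_{\theta,a}\cdot r = F_{\theta,a}\cdot p + \sqrt\epsilon\,(F_{\theta,a}\cdot r - F_{\theta,a}\cdot p) = F_{\theta,a}\cdot p + \sqrt\epsilon\, u^\sfP(p,a,\theta)$, and $u^\sfP(p,a,\theta)\in[-1,1]$ (since $r,p\in[0,1]^m$); hence both the agent's and the principal's expected payments under $p'$ differ from those under $p$ by at most $\sqrt\epsilon$ in absolute value. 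More precisely, $u^\sfA(p',a,\theta) = u^\sfA(p,a,\theta) + \sqrt\epsilon\, u^\sfP(p,a,\theta)$ and $u^\sfP(p',a,\theta) = (1-\sqrt\epsilon)\,u^\sfP(p,a,\theta)$.

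First I would analyze the response sets. Let $a^\star$ be an exact best response to $p$ for type $\theta$ and let $\hat a \coloneqq a^{\theta,\delta+\epsilon}(p')$ be the worst $(\delta+\epsilon)$-best response to $p'$. Using the identity above, for any action $a$ we have $u^\sfA(p',a,\theta) - u^\sfA(p',a^\star,\theta) = \big(u^\sfA(p,a,\theta) - u^\sfA(p,a^\star,\theta)\big) + \sqrt\epsilon\big(u^\sfP(p,a,\theta) - u^\sfP(p,a^\star,\theta)\big)$. Since $a^\star$ is optimal for the agent under $p$, the first bracket is $\le 0$; and the second bracket is at least $-2\sqrt\epsilon \cdot$ wait — better: the second bracket lies in $[-2,2]$, so the $\sqrt\epsilon$ term is $\ge -2\sqrt\epsilon$. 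Actually what I want is the reverse direction: I want to show that if $\hat a$ is a $(\delta+\epsilon)$-best response to $p'$ then it is a ``$\delta$-ish'' response to $p$, but $\delta$ may not be large enough. The cleaner route: it suffices to show $u^\sfP(p', \hat a, \theta) \ge u^\sfP(p, a^{\theta,\delta}(p), \theta) - 2\sqrt\epsilon$. I claim $\hat a \in A^{\theta,\delta}(p)$: indeed $\hat a \in A^{\theta,\delta+\epsilon}(p')$ means $u^\sfA(p',\hat a,\theta) > \max_{a'} u^\sfA(p',a',\theta) - (\delta+\epsilon) \ge u^\sfA(p',a^\star,\theta) - (\delta+\epsilon)$, and substituting the identity yields $u^\sfA(p,\hat a,\theta) + \sqrt\epsilon\,u^\sfP(p,\hat a,\theta) > u^\sfA(p,a^\star,\theta) + \sqrt\epsilon\,u^\sfP(p,a^\star,\theta) - (\delta+\epsilon)$, i.e. $u^\sfA(p,\hat a,\theta) > u^\sfA(p,a^\star,\theta) - \delta - \epsilon + \sqrt\epsilon\,(u^\sfP(p,a^\star,\theta) - u^\sfP(p,\hat a,\theta))$; the last difference is $\ge -1$ (principal utilities in $[0,1]$ up to sign — need care here), so the right side is $\ge u^\sfA(p,a^\star,\theta) - \delta - \epsilon - \sqrt\epsilon$. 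This is not quite $-\delta$; I expect the honest bound requires using $u^\sfP(p,a^\star,\theta)-u^\sfP(p,\hat a,\theta)\ge -1$ together with $\epsilon+\sqrt\epsilon \le$ something, or re-deriving with tighter constants — this sign-bookkeeping is the main obstacle, and I would resolve it by noting that $a^\star$ maximizes the principal's utility among exact best responses (tie-breaking), or by choosing $a^\star = a^{\theta,\delta}(p)$-adjacent, so that $u^\sfP(p,a^\star,\theta) \ge 0$ and the error term only helps.

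Granting $\hat a \in A^{\theta,\delta}(p)$, the second step is immediate: by definition of $a^{\theta,\delta}(p)$ as the worst $\delta$-best response, $u^\sfP(p,\hat a,\theta) \ge u^\sfP(p, a^{\theta,\delta}(p),\theta)$. Combining with $u^\sfP(p',\hat a,\theta) = (1-\sqrt\epsilon)\,u^\sfP(p,\hat a,\theta) = u^\sfP(p,\hat a,\theta) - \sqrt\epsilon\,u^\sfP(p,\hat a,\theta) \ge u^\sfP(p,\hat a,\theta) - \sqrt\epsilon$ (as $u^\sfP(p,\hat a,\theta)\le 1$, and if it is negative the inequality $u^\sfP(p',\hat a,\theta)\ge u^\sfP(p,\hat a,\theta)$ holds trivially after the scaling toward $r$ — again a small sign check), we get $u^\sfP(p',a^{\theta,\delta+\epsilon}(p'),\theta) = u^\sfP(p',\hat a,\theta) \ge u^\sfP(p,a^{\theta,\delta}(p),\theta) - 2\sqrt\epsilon$, which is the claim. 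The main obstacle, as flagged, is controlling the cross term $\sqrt\epsilon\,(u^\sfP(p,a^\star,\theta)-u^\sfP(p,\hat a,\theta))$ in the membership argument; I expect it is handled by taking $a^\star$ to be the principal-preferred best response to $p$ (so $u^\sfP(p,a^\star,\theta)$ is as large as possible, in particular nonnegative since the opt-out action is always an IR option) and by absorbing the residual $\epsilon$ into the $2\sqrt\epsilon$ slack using $\epsilon \le \sqrt\epsilon$.
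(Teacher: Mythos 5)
Your overall strategy (control the worst $(\delta+\epsilon)$-best response $\hat a$ to $p'$, then pay $\sqrt\epsilon$ for the rescaling toward $r$) is the right one, and your identities $u^\sfA(p',a,\theta)=u^\sfA(p,a,\theta)+\sqrt\epsilon\,u^\sfP(p,a,\theta)$ and $u^\sfP(p',a,\theta)=(1-\sqrt\epsilon)\,u^\sfP(p,a,\theta)$ are exactly the right computation. The genuine gap is the membership claim $\hat a\in A^{\theta,\delta}(p)$: it is false in general, and the obstacle you flagged cannot be removed by a better choice of $a^\star$. Your own derivation gives $u^\sfA(p,\hat a,\theta)>u^\sfA(p,a^\star,\theta)-\delta-\epsilon+\sqrt\epsilon\bigl(u^\sfP(p,a^\star,\theta)-u^\sfP(p,\hat a,\theta)\bigr)$, so to conclude membership you would need $u^\sfP(p,a^\star,\theta)-u^\sfP(p,\hat a,\theta)\ge\sqrt\epsilon$, which no tie-breaking rule for $a^\star$ guarantees: an action $a$ with $u^\sfA(p,a,\theta)=u^\sfA(p,a^\star,\theta)-\delta$ and $u^\sfP(p,a,\theta)$ at least $u^\sfP(p,a^\star,\theta)-\sqrt{\epsilon}+\epsilon/\sqrt{\epsilon}$ lies outside $A^{\theta,\delta}(p)$ but inside $A^{\theta,\delta+\epsilon}(p')$. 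Nonnegativity of $u^\sfP(p,a^\star,\theta)$ is irrelevant, since only the \emph{difference} with $u^\sfP(p,\hat a,\theta)$ enters.

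The repair is a case split, which is what the paper does. If $\hat a\in A^{\theta,\delta}(p)$, your argument goes through (and in fact loses only $\sqrt\epsilon$). If $\hat a\notin A^{\theta,\delta}(p)$, then $u^\sfA(p,\hat a,\theta)\le u^\sfA(p,a^\star,\theta)-\delta$; plugging this into the very inequality you derived yields $0>-\epsilon+\sqrt\epsilon\bigl(u^\sfP(p,a^\star,\theta)-u^\sfP(p,\hat a,\theta)\bigr)$, i.e., $u^\sfP(p,\hat a,\theta)>u^\sfP(p,a^\star,\theta)-\sqrt\epsilon\ge u^\sfP(p,a^{\theta,\delta}(p),\theta)-\sqrt\epsilon$, where the last step holds because $a^\star\in A^\theta(p)\subseteq A^{\theta,\delta}(p)$ and $a^{\theta,\delta}(p)$ minimizes the principal's utility over that set. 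Combining with the rescaling loss $u^\sfP(p',\hat a,\theta)=(1-\sqrt\epsilon)\,u^\sfP(p,\hat a,\theta)\ge u^\sfP(p,\hat a,\theta)-\sqrt\epsilon$ (valid since $u^\sfP\le 1$) gives the stated $2\sqrt\epsilon$. So all the ingredients are already in your write-up; the missing idea is to use the \emph{failure} of membership as information --- it forces $\hat a$ to have nearly maximal principal utility under $p$ --- rather than trying to force membership to hold.
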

%
%
%
%
%
%
%
%
We observe that the idea of shifting payments towards the principal's reward vector, as in Lemma~\ref{lem:epsilonconvert}, was first adopted by~\citet{dutting2021complexity} in non-robust settings, in order to deal with approximately incentive-compatible contracts.

Thanks to Lemma~\ref{lem:epsilonconvert}, it is possible to show that there always exists a contract $p \in \mathcal{B}_{\epsilon}$ that satisfies the following condition, where $\mathcal{B}_{\epsilon} \subseteq \mathcal{C}$ is a uniform grid of the hypercube $\mathcal{C}$, built with step size $\epsilon > 0$ (see also Algorithm~\ref{alg:no_regret}).
\[ \sum_{\theta \in \Theta} \lambda_\theta  u^{\textnormal{P}}(p,a^{\theta,\delta}(p),\theta) \ge \textnormal{OPT}(\mathcal{C},\delta) - \mathcal{O}( \sqrt{\epsilon}).\]
Consequently, by suitably choosing $\epsilon > 0$ as a function of the time horizon $T$, and by instantiating a no-regret algorithm with set of arms $\mathcal{B}_{\epsilon}$, we can upper bound the regret suffered by Algorithm~\ref{alg:no_regret} as follows.
\begin{restatable}{theorem}{NoRegretThm}\label{thm:no_regret}
	The regret suffered by Algorithm~\ref{alg:no_regret} can be upper bounded as follows:
	\begin{align*}
		R_T (\mathcal{C},\delta) \le \mathcal{\widetilde{O}} \left(   T^{1- \frac{1}{2(m+1)}}  \right).
	\end{align*}
\end{restatable}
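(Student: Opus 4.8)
The plan is to reduce the learning problem over the continuous contract space $\mathcal{C}=[0,1]^m$ to a finite multi-armed bandit problem over the grid $\mathcal{B}_\epsilon$, and then balance a discretization bias term against the bandit regret. First I would set up the decomposition of the regret into two parts: (i) the \emph{approximation error} $T\big(\textnormal{OPT}(\mathcal{C},\delta) - \max_{p\in\mathcal{B}_\epsilon}\sum_\theta \lambda_\theta u^\sfP(p,a^{\theta,\delta}(p),\theta)\big)$, incurred because the algorithm only ever plays contracts from the grid; and (ii) the \emph{bandit regret} of the no-regret subroutine run with arm set $\mathcal{B}_\epsilon$, measured against the best fixed grid point. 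For part (i), I would invoke Lemma~\ref{lem:epsilonconvert}: given an optimal $\delta$-robust contract $p^\ast\in\mathcal{C}$, the shifted contract $(1-\sqrt\epsilon)p^\ast+\sqrt\epsilon r$ is $(\delta+\epsilon)$-robust and loses at most $2\sqrt\epsilon$ in every type's utility, hence at most $2\sqrt\epsilon$ in the $\lambda$-weighted sum; one then rounds this shifted contract to its nearest grid point in $\mathcal{B}_\epsilon$ and controls the further loss from rounding. A subtlety here is that rounding a $(\delta+\epsilon)$-robust contract by an $\epsilon$-step could move it out of the relevant $\delta$-robust regime; I would argue that a grid point at $\ell_\infty$-distance $\le\epsilon$ changes each agent's utility by $O(\epsilon)$, so it remains a $\delta'$-best-response structure for $\delta'$ close to $\delta$, and in any case the principal's utility degrades by only $O(\epsilon)$ additively. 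Combining, there is a grid point whose expected utility is within $O(\sqrt\epsilon)$ of $\textnormal{OPT}(\mathcal{C},\delta)$, as already asserted in the paragraph preceding the theorem, so part (i) is at most $O(\sqrt\epsilon\, T)$.

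For part (ii), the key point is that each round the principal only observes the realized outcome $\omega^t\sim F_{a^t}$ and its reward $r_{\omega^t}$; the per-round payoff is $r_{\omega^t}-p^t_{\omega^t}$, which is an unbiased estimate of $u^\sfP(p^t,a^{\theta^t,\delta}(p^t),\theta^t)$ conditioned on $\theta^t$ and on the agent playing its worst $\delta$-best response, and hence—taking expectation over $\theta^t\sim\lambda$—an unbiased estimator of $\sum_\theta\lambda_\theta u^\sfP(p^t,a^{\theta,\delta}(p^t),\theta)$. Thus the bandit feedback is exactly the (bounded, in $[-1,1]$) reward of the arm $p^t\in\mathcal{B}_\epsilon$, and I can run a standard adversarial-or-stochastic bandit algorithm (e.g.\ EXP3/UCB) over the $|\mathcal{B}_\epsilon|$ arms. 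Since $\mathcal{B}_\epsilon$ is a uniform $\epsilon$-grid of $[0,1]^m$, we have $|\mathcal{B}_\epsilon| = \big(\lceil 1/\epsilon\rceil+1\big)^m = \widetilde{O}(\epsilon^{-m})$, so the bandit regret is $\widetilde{O}\big(\sqrt{|\mathcal{B}_\epsilon|\,T}\big)=\widetilde{O}\big(\epsilon^{-m/2}\sqrt{T}\big)$.

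Finally I would optimize $\epsilon$. The total regret is $\widetilde{O}\big(\sqrt\epsilon\,T + \epsilon^{-m/2}\sqrt{T}\big)$; setting the two terms equal gives $\epsilon^{(m+1)/2}=T^{-1/2}$, i.e.\ $\epsilon = T^{-1/(m+1)}$, and both terms become $\widetilde{O}\big(T^{1-\frac{1}{2(m+1)}}\big)$, which is the claimed bound. I expect the main obstacle to be part (i): carefully justifying that the composition of the Lemma~\ref{lem:epsilonconvert} shift with the grid-rounding keeps the contract inside (or close enough to) the set of $\delta$-robust contracts for \emph{every} type simultaneously, and that the adversarial choice of worst $\delta$-best response does not amplify the $O(\epsilon)$ perturbation—the discontinuity of $p\mapsto a^{\theta,\delta}(p)$ at the boundary of the best-response regions is exactly the kind of place where naive continuity arguments fail, which is why Lemma~\ref{lem:epsilonconvert} (moving toward $r$, so that the perturbed contract is strictly more robust) rather than an arbitrary small perturbation is needed. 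Everything else—the regret decomposition, the unbiasedness of the outcome-based feedback, and the grid-size bookkeeping—is routine.
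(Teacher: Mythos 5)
Your proposal follows essentially the same route as the paper: decompose the regret into a discretization term and a bandit term, handle the former by composing the Lemma~\ref{lem:epsilonconvert} shift toward $r$ with rounding to the nearest point of $\mathcal{B}_\epsilon$, bound the latter by the \texttt{UCB1} regret over $|\mathcal{B}_\epsilon| = O(\epsilon^{-m})$ arms, and set $\epsilon = T^{-1/(m+1)}$. The one bookkeeping detail the paper makes explicit, which you correctly flag as the delicate point, is that the $\ell_\infty$-rounding by $\epsilon$ introduces $2\epsilon$ of slack in the agent's incentive comparison, so any $\delta$-best response to the grid point is only guaranteed to be a $(\delta+2\epsilon)$-best response to the shifted contract; the paper therefore shifts by $\sqrt{2\epsilon}$ rather than $\sqrt{\epsilon}$ and invokes the lemma with parameter $2\epsilon$, which changes only constants in the final bound.
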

We remark that the regret lower bound introduced by~\citet{Zhu2023Sample} also holds in our setting. 
This is because, when the parameter $\delta$ is arbitrarily small, if we consider the same instances used by~\citet{Zhu2023Sample} in their lower bound, the principal is still required to enumerate an exponential number of regions, thus suffering $\Omega(T^{1-1/(m+2)})$ regret.
This confirms that an exponential dependence on the number of outcomes is unavoidable in the regret suffered by any algorithm.
Furthermore, Theorem~\ref{thm:no_regret} shows regret guarantees similar to those obtained by~\citet{Zhu2023Sample} in the non-robust version of the problem.
Indeed, they achieve an upper bound of the order of $\widetilde{\mathcal{O}}(\sqrt{m} \cdot T^{1 - 1/(2m+1)})$ on the regret.
\begin{algorithm}[!htp]
	\caption{Regret minimizer for $\delta$-robust contracts}
	\label{alg:no_regret}
	\begin{algorithmic}[1]
		\Require $T >0 $
		\State Set $\epsilon\gets T^{-\frac{1}{m+1}}$
		\State $\mathcal{B}_{\epsilon} \hspace{-0.1mm} \gets \hspace{-0.2mm} \left\{ p \in [0,1]^m \, | \, p_\omega \in \{0,\epsilon, 2\epsilon, ..., 1 \} \, \forall \omega \in \Omega \right\} $ 
		\State Run \texttt{UCB1} with $\mathcal{B}_\epsilon $ as set of arms 
	\end{algorithmic}
\end{algorithm}

Finally, we also show that, when the parameter $\delta \in (0,1)$ is sufficiently small (with respect to the time horizon $T>0$), Algorithm~\ref{alg:no_regret} achieves sublinear regret with respect to an optimal non-robust contract within $\mathcal{C}$.
Formally, we define the value of such a contract as follows:
$$
\textnormal{OPT}(\mathcal{C}) := \max_{p \in \mathcal{C}}  \sum_{\theta \in \Theta}\lambda_\theta u^\sfP(p, a^\theta(p),\theta).
$$
Furthermore, when $\textnormal{OPT}(\mathcal{C})$ is chosen as baseline, the regret definition becomes the following:
%
\begin{align*}
	R_T(\mathcal{C}) \hspace{-0.5mm}&\coloneqq \hspace{-0.5mm} T \, \textnormal{OPT}(\mathcal{C})\hspace{-0.3mm}-  \hspace{-0.3mm}\hspace{-0.3mm} \mathbb{E}\left[\sum_{t \in [T],\theta \in \Theta} \hspace{-3.2mm}\lambda_\theta u^\sfP(p^t, a^{\theta,\delta}(p^t),\theta) \hspace{-0.3mm} \right] \hspace{-0.5mm},
\end{align*}
where the expectation is over the randomness of the learning algorithm.
We remark that the regret definition above coincides with the one introduced by~\citet{Zhu2023Sample}.
Then, we can prove that the following corollary of Theorem~\ref{thm:no_regret} holds.
\begin{restatable}{corollary}{NoRegretCor}\label{cor:no_regret}
	The regret suffered by Algorithm~\ref{alg:no_regret} can be upper bounded as follows:
	\begin{align*}
		R_T (\mathcal{C})  \le \mathcal{\widetilde{O}} \left(   T^{1- \frac{1}{2(m+1)}}  \right) +  2 \sqrt{\delta} T.
	\end{align*}
\end{restatable}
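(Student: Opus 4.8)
The plan is to reduce Corollary~\ref{cor:no_regret} to Theorem~\ref{thm:no_regret} by comparing the two baselines appearing in $R_T(\mathcal{C})$ and $R_T(\mathcal{C},\delta)$. Since the learning algorithm and the realized utility term $\mathbb{E}[\sum_{t,\theta}\lambda_\theta u^\sfP(p^t,a^{\theta,\delta}(p^t),\theta)]$ are \emph{identical} in both regret definitions, subtracting them gives
\[
R_T(\mathcal{C}) - R_T(\mathcal{C},\delta) = T\big(\textnormal{OPT}(\mathcal{C}) - \textnormal{OPT}(\mathcal{C},\delta)\big).
\]
Hence it suffices to prove the ``price of robustness within the hypercube'' bound $\textnormal{OPT}(\mathcal{C}) - \textnormal{OPT}(\mathcal{C},\delta) \le 2\sqrt{\delta}$, and then apply Theorem~\ref{thm:no_regret} to the term $R_T(\mathcal{C},\delta)$.

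To bound the gap between the two baselines I would mimic the payment-shifting idea behind Proposition~\ref{pro:properties}(1) and Lemma~\ref{lem:epsilonconvert}, applied uniformly over all types within $\mathcal{C}$. Let $p^\circ \in \mathcal{C}$ attain $\textnormal{OPT}(\mathcal{C})$ and set $p' \coloneqq (1-\sqrt{\delta})\, p^\circ + \sqrt{\delta}\, r$; since $p^\circ, r \in [0,1]^m$, also $p' \in \mathcal{C}$. The argument rests on two identities that follow from $p' - p^\circ = \sqrt{\delta}(r - p^\circ)$: for every type $\theta$ and action $a$,
\[
u^\sfP(p',a,\theta) = (1-\sqrt{\delta})\,u^\sfP(p^\circ,a,\theta), \qquad u^\sfA(p',a,\theta) = u^\sfA(p^\circ,a,\theta) + \sqrt{\delta}\,u^\sfP(p^\circ,a,\theta).
\]
Fix $\theta$, write $a^\star \coloneqq a^\theta(p^\circ)$, and let $b \coloneqq a^{\theta,\delta}(p')$ be the worst $\delta$-best response to $p'$. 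From $u^\sfA(p',b,\theta) > u^\sfA(p',a^\star,\theta) - \delta$, the two identities, and the fact that $a^\star$ is an exact best response to $p^\circ$ (so $u^\sfA(p^\circ,a^\star,\theta) \ge u^\sfA(p^\circ,b,\theta)$), one obtains $u^\sfP(p^\circ,b,\theta) > u^\sfP(p^\circ,a^\star,\theta) - \sqrt{\delta}$; scaling by $(1-\sqrt{\delta}) \in (0,1)$ and using $u^\sfP(p^\circ,a^\star,\theta) \le 1$ yields $u^\sfP(p',b,\theta) \ge u^\sfP(p^\circ,a^\star,\theta) - 2\sqrt{\delta}$. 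Averaging this per-type inequality against $\lambda$ and using $p'\in\mathcal{C}$ gives $\textnormal{OPT}(\mathcal{C},\delta) \ge \sum_{\theta}\lambda_\theta\, u^\sfP(p',a^{\theta,\delta}(p'),\theta) \ge \textnormal{OPT}(\mathcal{C}) - 2\sqrt{\delta}$.

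The delicate point — the one I would be most careful about — is that $\textnormal{OPT}(\mathcal{C},\delta)$ measures the principal's utility under the adversarial (worst-case) $\delta$-best response, whereas $\textnormal{OPT}(\mathcal{C})$ uses the principal-favorable exact best response; the step above handles this by showing that \emph{any} $\delta$-best response $b$ to the shifted contract $p'$ is already a near-best-response to $p^\circ$, so taking the worst such $b$ costs only an $O(\sqrt{\delta})$ additive loss rather than something uncontrolled. (Alternatively one could derive $\textnormal{OPT}(\mathcal{C},\delta)\ge\textnormal{OPT}(\mathcal{C})-2\sqrt{\delta}$ by combining Lemma~\ref{lem:epsilonconvert} with a limiting/continuity argument in $\delta$ analogous to Proposition~\ref{pro:properties2}, but the direct route is self-contained.) Combining everything, $R_T(\mathcal{C}) = R_T(\mathcal{C},\delta) + T\big(\textnormal{OPT}(\mathcal{C}) - \textnormal{OPT}(\mathcal{C},\delta)\big) \le \widetilde{\mathcal{O}}\big(T^{1-\frac{1}{2(m+1)}}\big) + 2\sqrt{\delta}\,T$ by Theorem~\ref{thm:no_regret}.
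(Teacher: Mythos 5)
Your proposal is correct and follows essentially the same route as the paper: decompose $R_T(\mathcal{C})$ into $T\bigl(\textnormal{OPT}(\mathcal{C})-\textnormal{OPT}(\mathcal{C},\delta)\bigr)$ plus $R_T(\mathcal{C},\delta)$, bound the first term by $2\sqrt{\delta}\,T$ via the payment-shifting argument of Proposition~\ref{pro:properties}, and bound the second via Theorem~\ref{thm:no_regret}. The only difference is that the paper invokes the shifting argument by reference, whereas you write it out explicitly for the typed, hypercube-constrained setting (correctly checking that $p'\in\mathcal{C}$ and handling the worst-case $\delta$-best response), which fills in exactly the step the paper leaves implicit.
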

We remark that, if we set $\delta = \nicefrac{1}{T^\alpha}$ with $\alpha>0$ in Corollary~\ref{cor:no_regret}, then the regret with respect to an optimal non-robust contract within the hypercube $\mathcal{C}$ is sublinear.

\begin{remark}
	Our algorithm can be extended to deal with settings in which the agent can play \emph{any} $\delta$-best response within the set $A^{\theta^t,\delta}(p^t)$. In such a setting, the principal's utility is \emph{not} fully stochastic. However, our Algorithm~\ref{alg:no_regret} can be easily extended by instantiating an adversarial no-regret algorithm instead of \texttt{UCB1}.
\end{remark}
%

\paragraph{Comparison with~\citep{Zhu2023Sample}}
We observe that our algorithm provides some advantages compared to the state-of-the-art algorithm proposed by~\citet{Zhu2023Sample}.
First, our approach employs as set of contracts a simple discretization of the hypercube, while the approach by~\citet{Zhu2023Sample} requires determining the minimum set of contracts $\mathcal{V}_{\epsilon}(\mathcal{C})$---for some suitable $\epsilon>0$---such that, for every $p \in \mathcal{C}$, there exists a $p' \in \mathcal{V}_{\epsilon}(\mathcal{C})$ that satisfies $(r - p) \cdot p' \ge \cos(\epsilon)$.
However, the need of designing such a set of contracts makes the approach by~\citet{Zhu2023Sample} challenging to be employed in practice compared to ours.
Second, our algorithm does \emph{not} require \emph{apriori} knowledge of the principal's reward, which is instead required by the algorithm of~\citet{Zhu2023Sample}.
%



\clearpage
\bibliographystyle{plainnat}
\bibliography{references}
\newpage
\appendix
\section*{Appendix}
\RealtionNonRobust*
\begin{proof}
	We prove the two points separately.
	\begin{enumerate}
		\item Let $p'= (1-\sqrt{\delta})p + \sqrt{\delta} r $ for any contract $p \in \Rset$.
		We start by observing that the following inequality holds:
		\begin{equation*}	
			\sum_{\omega \in \Omega}F_{ a(p),\omega} p_\omega - c_{ a(p)} \ge \sum_{\omega \in \Omega}F_{ a^\delta(p'),\omega} p_\omega - c_{ a^\delta(p')},
		\end{equation*}
		because $a(p) \in \mathcal{A}(p)$. Furthermore, we have:
		\begin{equation*}	
			\sum_{\omega \in \Omega}F_{ a^\delta(p'),\omega} p_\omega' - c_{ a^\delta(p')} > \max_{a' \in \mathcal{A}} \sum_{\omega \in \Omega}F_{ a',\omega} p_\omega - c_{ a'} -\delta \ge \sum_{\omega \in \Omega}F_{ a(p),\omega} p_\omega' - c_{ a(p)} - \delta,
		\end{equation*}
		thanks to the definition of $\A^\delta(p')$.
		Then, by employing the two above inequalities and the definition of $p'$, we have:
		\begin{align*}	
			\delta & \ge \sum_{\omega \in \Omega}F_{ a(p),\omega} p_\omega' - c_{ a(p)} - \left(\sum_{\omega \in \Omega}F_{ a^\delta(p'),\omega} p_\omega' - c_{ a^\delta(p')} \right)\\
			& =  \sum_{\omega \in \Omega}F_{ a(p),\omega} p_\omega - c_{ a(p)} - \left( \sum_{\omega \in \Omega}F_{ a^\delta(p'),\omega} p_\omega - c_{ a^\delta(p')} \right)  + \sqrt \delta \left( \sum_{\omega \in \Omega}\left(F_{ a(p),\omega} - F_{ a^\delta(p),\omega}\right) (r_\omega - p_\omega) \right) \\ 
			& \ge \sqrt \delta \left( \sum_{\omega \in \Omega}\left(F_{ a(p),\omega} - F_{ a^\delta(p),\omega}\right) (r_\omega - p_\omega) \right). 
		\end{align*}
		Thus, by rearranging the latter inequality we get:
		\begin{equation}\label{eq:sqrtdelta}
			\sqrt \delta  \ge \left( \sum_{\omega \in \Omega}\left(F_{ a(p),\omega} - F_{ a^\delta(p),\omega}\right) (r_\omega - p_\omega) \right). 
		\end{equation}
		Finally, we can show that:
		\begin{align*}	
			u^\sfP(p,a(p))  - u^\sfP(p',a^{\delta}(p'))  &= \sum_{\omega \in \Omega}F_{ a(p),\omega} (r_\omega - p_\omega ) - \left(\sum_{\omega \in \Omega}F_{ a^\delta(p'),\omega}(r_\omega - p_\omega' )  \right)\\
			&= \sum_{\omega \in \Omega}F_{ a(p),\omega} (r_\omega - p_\omega ) - \left(1 -\sqrt{\delta}\right) \left(\sum_{\omega \in \Omega}F_{ a^\delta(p'),\omega}(r_\omega - p_\omega)  \right)\\
			&\le \left(1 -\sqrt{\delta}\right) \left(\sum_{\omega \in \Omega} (F_{ a(p),\omega} - F_{ a^\delta(p'),\omega})(r_\omega - p_\omega)  \right) + \sqrt{\delta} \\
			&\le 2 \sqrt{\delta} - \delta,
		\end{align*}
		where the second equality holds thanks to the definition of $p'$, the first inequality because $u^\sfP(p,a(p)) \le 1$ for each $p \in \mathbb{R}^{m}_{+}$ and the second inequality because of \Cref{eq:sqrtdelta}.
		Finally, let $p \in \mathbb{R}$ be an optimal (non-robust) contract, then we have:
		\begin{align*}	
			\textnormal{OPT} - 2 \sqrt{\delta} + \delta \le  u^\sfP(p',a^{\delta}(p')) \le \textnormal{OPT}(\delta) ,
		\end{align*}
		concluding the first part of the proof. 
		\item We split the proof into two parts.
		\begin{enumerate}
			\item if $\textnormal{OPT}(\delta) = 0$. Then, we trivially have :
			\begin{equation*}
				0=\textnormal{OPT}(\delta)\le \max\left(0, \, \max_{a \in \mathcal{A}} \sum_{\omega \in \Omega}F_{a,\omega}r_\omega - c_a - \delta \right) = \max\left(0, \, \textnormal{SW} - \delta \right)
			\end{equation*}
			\item if $\textnormal{OPT}(\delta) > 0$.
			We define $p^\star \in \Rset$ as an optimal $\delta$-robust contract.
			Then, we notice that $a_{1} \not \in \mathcal{A}^{\delta}(p^\star)$, where $a_1$ is the opt-out action.
			Indeed, if $a_{1} \in \mathcal{A}^{\delta}(p^\star)$, then the agent may select the action $a_1$ as a $\delta$-robust best-response, which provides zero or negative utility to the principal and contradicts the fact that $\textnormal{OPT}(\delta) > 0$.
			Therefore, in an optimal robust contract $p^\star$, we must have:
			\begin{align*}
				\sum_{\omega \in \Omega} F_{a',\omega} p_\omega^\star - c_{a'}  &\ge  \sum_{\omega \in \Omega}  F_{a_1,\omega} p_\omega^\star - c_{a_1} + \delta\ge  \sum_{\omega \in \Omega}  F_{a_1,\omega} p_\omega^\star + \delta \ge \delta,
			\end{align*}
			for some $a' \in \mathcal{A}^{\delta}(p^\star)$.
			Thus, we have:
			\begin{align*}
				\textnormal{OPT}(\delta)
				& = \sum_{\omega \in \Omega} F_{a^\delta(p),\omega} (r_\omega -p_\omega^\star ) \\
				& \le \sum_{\omega \in \Omega} F_{a',\omega} (r_\omega -p_\omega^\star ) \\
				& \le \sum_{\omega \in \Omega} F_{a',\omega} r_\omega - c_{a'}  - \delta \\
				& \le  \max_{a \in \mathcal{A}} \sum_{\omega \in \Omega}F_{a,\omega}r_\omega - c_a - \delta \\
				& \le \max\left(0, \, \max_{a \in \mathcal{A}} \sum_{\omega \in \Omega}F_{a,\omega}r_\omega - c_a - \delta \right) = \max\left(0, \, \textnormal{SW} - \delta \right).
			\end{align*}
			The first inequality above follows from the fact that $a' \in \mathcal{A}^{\delta}(p^\star)$, while the second inequality holds due to the previous observation.
		\end{enumerate}
	\end{enumerate}
	The two above points conclude the proof.
\end{proof}
\RealtionNonRobustTwo*
\begin{proof}
	We prove the two points separately.
	\begin{enumerate}
		\item We consider an instance parametrized by $\delta > 0$, with $|\Omega| = 2$ and $|\mathcal{A}| = 2n + 1$ for some $n \in \mathbb{N}$, to be defined below. We let:
		$$\kappa  = \min \Bigg\{i\in \mathbb{N}_{>0} \, | \, \sqrt{\delta } < \frac{i- 1}{i} \Bigg\}.$$
		Furthermore, we introduce the following decreasing sequence:
		$$\gamma_i = \begin{cases}
			\vspace{1mm}
			\frac{i}{i-1} \,\,\ & i = \kappa, \dots, n \\
			\frac{2n+1-i}{2n+2-i} \,\,\ &i = n+2, \dots, 2n,\\
		\end{cases}$$
		with $\gamma_{n+1}=1$ and $\gamma_{2n+1}=0$, where $n \in \mathbb{N}$ is such that $n > \kappa$.
		The distributions over the set of outcomes for the different actions and their corresponding costs are given by:
		$$ \begin{cases}
			F_{a_i,\omega_1}=0 \,\,& c_{a_i}=0, \,\, i=1, \dots, \kappa-1.   \\
			F_{a_i,\omega_1}=1-\gamma_i \sqrt{\delta}\,\,& c_{a_i}=0, \,\, i=\kappa, \dots, n . \\
			F_{a_{2n+1,\omega_1}}=1 \,\,& c_{a_{2n+1}}=0.
		\end{cases}$$
		and the principal's reward is $r=(1,0)$.
		Notice that the distribution over outcomes of the different actions are always well defined because of the definition of $\kappa>0$.
		Furthermore, since all the agent's actions $a_i$ with $i < k$ are coincident, we assume for the sake of presentation that the agent always selects $a_1$.
		It is easy to verify that the value of an optimal (non-robust) contract is $\textnormal{OPT}=1$ since the agent breaks ties optimistically and $c_{2n+1}=0$.	
		
		With a similar argument to the one proposed by~\cite{dutting2024algorithmic} in Proposition 3.9, an optimal $\delta$-robust contract is such that $p^\star = (0, \alpha)$ for some $\alpha \in \mathbb{R}_{+}$.
		Consequently, in the rest of the proof, we focus on determining the maximum utility achievable in a linear, $\delta$-robust contract. 
		
		%

		For every $a_i \in \mathcal{A}$, with $\kappa<i \le 2n+1$, if $\alpha \in \mathbb{R}_{+}$ is such that $a^\delta(\alpha r)= a_i$, then the two following conditions hold.
		\begin{itemize}
			\item All the actions $a_j \not \in \mathcal{A}^\delta(\alpha r)$ for each $j < i$.
			This is because $u^\sfP(\alpha, a_j) \le u^\sfP(\alpha, a_i)$ for each $\alpha \in \mathbb{R}_{+}$ since $\{\gamma_i\}_{i \in [2n+1]}$ is a decreasing sequence.
			Thus, the latter observation implies that:
			\begin{align*}
				u^\sfA(\alpha, a_{j}) 
				\le u^\sfA(\alpha, a_{i-1}) & = \alpha R_{a_{i-1}} - c_{a_{i-1}}\\
				& \le \max_{i \in [2n+1]} u^\sfA(\alpha, a_{i}) -\delta\\
				& = u^\sfA(\alpha, a_{{2n+1}}) -\delta  \\
				&= \alpha R_{a_{2n+1}} -c_{a_{{2n+1}}} - \delta \\
				& = \alpha -\delta.
			\end{align*}
			Therefore, we have that: 
			\begin{align*}
				\alpha R_{i-1}\le \alpha - \delta, 
			\end{align*}
			and, thus, $\alpha \ge {\sqrt{\delta}} / {\gamma_{i-1}}$. 
			
			\item The action $a_i \in \A^\delta(\alpha r)$. Thus, 
			$$ \alpha R_{a_i} - c_{a_i} > \max_{i \in [2n+1]} u^\sfA(\alpha, a_{i}) -\delta = \alpha -\delta,
			$$
			which implies that $\alpha < {\sqrt{\delta}} / {\gamma_{i}}$.
		\end{itemize}
		The two observation above shows that $a_i$, with $i > \kappa$, is a $\delta$-best response for all the values of $\alpha$ such that:
		$${\sqrt{\delta}} / {\gamma_{i-1}} \le \alpha < {\sqrt{\delta}} / {\gamma_{i}}.$$
		Thus, the smallest value of $\alpha \in \mathbb{R}_{+}$ such that $a^\delta(\alpha r) = a_i$ is $\alpha_i \coloneqq {\sqrt{\delta}} / {\gamma_{i-1}}$, when $i > \kappa$.
		With the same argument above, it is possible to show that $a_1=a^\delta(\alpha r)$ for all $\alpha \in [\alpha_1,\alpha_\kappa)$ and $a_\kappa=a^\delta(\alpha r)$ for all $\alpha \in [\alpha_\kappa, \alpha_{\kappa+1})$, with $\alpha_1=0$ and $\alpha_\kappa=\delta$.
		
		We also observe that the principal's utility in each $\alpha_i$, with $i \in \{\kappa+1, \dots, 2n+1\}$, is such that:
		\begin{align*}
			u^\sfP(\alpha_i, a_{i}) =  (1-\alpha_i) R_{a_i} & = \left(1 - \frac{\sqrt{\delta}}{\gamma_{i-1}}\right) \left(1-\gamma_i \sqrt{\delta} \right)\\
			& = \left(1 - \frac{\sqrt{\delta}}{\gamma_{i-1}}\right) \left(1-\gamma_i \sqrt{\delta} \right)\\
			& = \left( 1- \left(\gamma_i  + \frac{1}{\gamma_{i-1}}\right)\right) \sqrt{\delta} +  \frac{\gamma_{i}}{\gamma_{i-1}} \delta.
		\end{align*}
		Therefore, using the latter formula, we have that the following holds.
		\begin{itemize}
			\item If $i=\kappa +1,\dots,n$, we have:
			$$u(\alpha_i) \le 1- 2\sqrt{\delta} + \frac{\gamma_i}{\gamma_{i-1}}\delta \le  1 - 2\sqrt{\delta} + \delta, $$
			since $\{\gamma_i\}_{i \ge \kappa}$ is  a decreasing sequence.
			
			\item If $i=n+1,n+2$, we have:
			$$u(\alpha_{i}) \le 1- \left(1 + \frac{n-1}{n} \right)\sqrt{\delta} +  \frac{n-1}{n} \delta \le 1- 2\sqrt{\delta} + \delta + \frac{\sqrt{\delta}}{n}.$$
			
			\item If $i=n+3, \dots, 2n+1$, we have:
			$$u(\alpha_{i}) \le 1- 2\sqrt{\delta} + \frac{\gamma_i}{\gamma_{i-1}}\delta \le 1- 2\sqrt{\delta} + \delta ,$$
			since $\{\gamma_i\}_{i \ge \kappa}$ is  a decreasing sequence.
		\end{itemize}
		
		We also observe that the principal's utility in $\alpha_{\kappa}$ is such that:
		\begin{align*}
			u(\alpha_{\kappa}) = (1-\delta)(1- \gamma_{\kappa} \sqrt{\delta} ).
		\end{align*}
		since $\gamma_k>1$ and $\delta>0$.
		Thanks to the definition of $\kappa$, with a simple calculation, it possible to show that:
		$$\kappa = \begin{cases}
			\vspace{1mm}
			\left\lceil\frac{1}{1-\sqrt{\delta}}\right\rceil \,\,\ & \textnormal{if} \,\,\, \frac{1}{1-\sqrt{\delta}} \not \in \mathbb{N}\\
			\frac{1}{1-\sqrt{\delta}} +1 \,\,\ & \textnormal{if} \,\,\, \frac{1}{1-\sqrt{\delta}} \in \mathbb{N}.\\
		\end{cases}$$
		Thus, when ${1}/{(1-\sqrt{\delta})}\not \in \mathbb{N}$, we have:
		\begin{align*}
			u(\alpha_{\kappa}) &= (1-\delta)(1- \gamma_{\kappa} \sqrt{\delta} )\\
			&= (1-\delta)\left(1- 
			\frac{\left\lceil\frac{1}{1-\sqrt{\delta}}\right\rceil }{ \left\lceil\frac{1}{1-\sqrt{\delta}}\right\rceil -1} \sqrt{\delta}\right)\\
			&\le (1-\delta)\left(1-2\sqrt{\delta}+\delta\right).
		\end{align*}
		Similarly, it is possible to show that the same relation holds even when ${1}/{(1-\sqrt{\delta})} \in \mathbb{N}$.
		Finally, by putting all together, we have that:
		\begin{align*}
			\textnormal{OPT}(\delta) -\textnormal{OPT}_{\textnormal{LB}}(\delta) 
			& \le \max_{i \in [2n+1]} u(\alpha_i) - (1 - 2 \sqrt{\delta} + \delta )\\
			&= u(\alpha_{n+1}) - (1 - 2 \sqrt{\delta} + \delta ) \le \frac{\sqrt{\delta}}{n},
		\end{align*}
		concluding the first part of the proof.
		\item We consider an instance with $|\Omega|=|\mathcal{A}|=2$ and $r=(0,1)$.
		The distributions over the set of outcomes for the different actions and their corresponding costs are given by:
		$$ \begin{cases}
			F_{a_1}=(1,0) \,\, &c_{a_1}=0\\
			F_{a_2}=(0,1) \,\, &c_{a_2}=0.
		\end{cases}$$
		%
		%
		It is easy to verify that $\textnormal{SW}=R_{a_2}-c_{a_2}=1$. 
		Furthermore, with a similar argument to the one proposed by~\cite{dutting2024algorithmic} in Proposition 3.9, an optimal $\delta$-robust contract is such that $p^\star = (0, \alpha)$ for some $\alpha \in \mathbb{R}_{+}$.

		We show that $\textnormal{OPT}(\delta) = 1-\delta.$ 
		Let $a_1 \in \mathcal{A}$ be the opt-out action.
		We observe that $a_1 \not \in {A}^{\delta}(\alpha r)$ for all the $\alpha \in \mathbb{R}_{+}$ that satisfy:
		\begin{equation*}
			\alpha R_{a_2} - c_{a_2} = \alpha \ge \alpha R_{a_1} - c_{a_1} + \delta = \delta.
		\end{equation*}
		Thus, $a_1 \not \in {A}^{\delta}(\alpha r)$ for all $\alpha \ge \delta$.
		Therefore, the smallest $\alpha \in \mathbb{R}_{+}$ such that the agent selects the action $a_2$ coincides with $\delta > 0$.
		Thus, the largest utility the principal can achieve satisfies $\textnormal{OPT}(\delta) = 1 - \delta = \textnormal{SW} - \delta$.
	\end{enumerate}
	The two points above conclude the proof.
\end{proof}
\RealtionNonRobustThree*
\begin{proof}
	We prove the claims separately.
	\begin{itemize}
		\item We first show that $\textnormal{OPT}(\delta)$ is a continuous function. 
		Let $p^\star \in \Rset$  be a $\delta$-robust and $p' = (1-\sqrt{\epsilon})p^\star + \sqrt{\epsilon} r$. Then, by Lemma~\ref{lem:epsilonconvert}, we have:
		\begin{equation*}
			\textnormal{OPT}(\delta+\epsilon) \ge u^{\textnormal{P}}(p',a^{\delta+\epsilon}(p')) \ge u^{\textnormal{P}}(p^\star,a^{\delta} (p^\star) ) - 2 \sqrt{\epsilon} + \epsilon = \textnormal{OPT}(\delta) - 2 \sqrt{\epsilon} + \epsilon.
		\end{equation*}
		Thus, rewriting the latter inequality and taking the limit, we have that:
		\begin{equation*}
			0= \lim_{\epsilon \to 0} (2 \sqrt{\epsilon} - \epsilon) \ge \lim_{\epsilon \to 0} \textnormal{OPT}(\delta) - \textnormal{OPT}(\delta+\epsilon).
		\end{equation*} 
		Let $p^\star \in \Rset$  be a $\delta'$-robust and $p' = (1-\sqrt{\epsilon})p^\star + \sqrt{\epsilon} r$ with $\delta'=\delta-\epsilon$. Then, by Lemma~\ref{lem:epsilonconvert}, we have: 
		\begin{equation*}
			\textnormal{OPT}(\delta)= \textnormal{OPT}(\delta'+\epsilon) \ge u^{\textnormal{P}}(p',a^{\delta'+\epsilon}(p')) \ge u^{\textnormal{P}}(p^\star,a^{\delta'} (p^\star) ) - 2 \sqrt{\epsilon} + \epsilon = \textnormal{OPT}(\delta-\epsilon) - 2 \sqrt{\epsilon} + \epsilon.
		\end{equation*}
		Thus, by taking the limit, we have that:
		\begin{equation*}
			0= \lim_{\epsilon \to 0} (2 \sqrt{\epsilon} - \epsilon) \ge \lim_{\epsilon \to 0} \textnormal{OPT}(\delta-\epsilon) - \textnormal{OPT}(\delta).
		\end{equation*} 
		Since the latter considerations hold for every $\delta \in (0,1)$, we can easily show that $\textnormal{OPT}(\delta)$ is continuous in $\delta \in (0,1)$.
		\item We now prove that $\textnormal{OPT}(\delta)$ is non-increasing. 
		Let $\delta,\delta' \in (0,1)$ such that $\delta'<\delta$. Furthermore, let $p^\star \in \Rset$ be an optimal $\delta$-robust contract. Therefore, we have:
		\begin{equation*}
			\textnormal{OPT}(\delta)=u(a^\delta(p^\star), p^\star ) \le u(a^{\delta'}(p^\star), p^\star ) \le \textnormal{OPT}(\delta'),
		\end{equation*}
		since, for each $p \in \Rset$, we have ${A}^{\delta'}(p) \subseteq  {A}^{\delta}(p)$.
		\item By Proposition~\ref{pro:properties}, we have:
		\begin{equation*}
			\lim_{\delta \to 0^+}  \textnormal{OPT}(\delta) \ge \lim_{\delta \to 0^+}\textnormal{OPT} - 2 \sqrt{\delta} + \delta =\textnormal{OPT},
		\end{equation*}
		and,
		\begin{equation*}
			\lim_{\delta \to 1^-}  \textnormal{OPT}(\delta) \le \lim_{\delta \to 1^-}1-\delta =0.
		\end{equation*}
	\end{itemize}
	The two points above conclude the proof.
\end{proof}
\epsilonconvert*
\begin{proof}
	For the sake of the presentation, we avoid the dependence on the type $\theta \in \Theta$.
	We split the proof in two cases: 
	\begin{enumerate}
		\item If $a^{\delta+\epsilon} (p') \not \in {A}^{\delta}(p)$, then $a^{\delta+\epsilon} (p')$ is not a $\delta$-best-response in $p$. Therefore, we have that:
		\begin{equation*}	
			\sum_{\omega \in \Omega} F_{ a^{\delta}(p),\omega} p_\omega - c_{ a^{\delta}(p)} \ge \sum_{\omega \in \Omega}F_{ a^{\delta+\epsilon}(p'),\omega} p_\omega - c_{ a^{\delta+\epsilon}(p')} + {\delta },
		\end{equation*}
		and, 
		\begin{equation*}	
			\sum_{\omega \in \Omega}F_{ a^{\delta+\epsilon}(p'),\omega} p_\omega' - c_{ a^{\delta+\epsilon}(p')}  \ge \sum_{\omega \in \Omega} F_{ a^{\delta}(p),\omega} p_\omega' - c_{ a^{\delta}(p)} - {\delta - \epsilon }.
		\end{equation*}
		%
		%
		Then, by taking the summation of the above quantities and employing the definition of $p'$  we get:
		\begin{equation*}	
			\sqrt \epsilon \ge \sum_{\omega \in \Omega} (F_{ a^{\delta}(p),\omega} - F_{ a^{\delta + \epsilon}(p'),\omega}) (r_\omega-p_\omega).
		\end{equation*}
		Therefore, we can prove that the following holds:
		\begin{align*}	
			u^\sfP(p,a^{\delta}(p)) & - u^\sfP(p',a^{\delta+ \epsilon}(p'))  = \sum_{\omega \in \Omega}F_{ a^{\delta}(p),\omega} (r_\omega - p_\omega ) - \left(\sum_{\omega \in \Omega}F_{ a^{\delta+\epsilon}(p'),\omega}(r_\omega - p_\omega' )  \right)\\
			&= \sum_{\omega \in \Omega}F_{ a^{\delta}(p),\omega} (r_\omega - p_\omega ) - \left(1 -\sqrt{\epsilon}\right) \left(\sum_{\omega \in \Omega}F_{ a^{\delta+\epsilon}(p'),\omega}(r_\omega - p_\omega)  \right)\\
			&\le \left(1 -\sqrt{\epsilon}\right) \left(\sum_{\omega \in \Omega} (F_{ a^{\delta}(p),\omega} - F_{ a^{\delta+\epsilon}(p'),\omega})(r_\omega - p_\omega)  \right) + \sqrt{\epsilon} \\
			&\le 2 \sqrt{\epsilon} - \epsilon.
		\end{align*}
		\item if $a^{\delta+\epsilon} (p') \in {A}^{\delta}(p)$, then one of the following hold.
		\begin{enumerate}
			\item if $a^\delta(p) \in {A}^{\delta+\epsilon}(p')$, then either $a^\delta(p) \equiv a^{\delta+\epsilon}(p')$ or $a^\delta(p)$  provides the same principal's utility as $a^{\delta+\epsilon} (p')$ in $p'$. Indeed, suppose by contradiction that the following holds:
			\begin{equation*}	
				\sum_{\omega \in \Omega} F_{ a^{\delta+\epsilon}(p'),\omega} (r_\omega-p_\omega') < \sum_{\omega \in \Omega} F_{ a^\delta(p),\omega} (r_\omega-p_\omega').
			\end{equation*}
			Then, we have:
			\begin{align*}	
				\sum_{\omega \in \Omega} F_{ a^{\delta+\epsilon}(p'),\omega} (r_\omega-p_\omega') 
				& = (1-\sqrt{\epsilon}) \sum_{\omega \in \Omega} F_{ a^{\delta+\epsilon}(p'),\omega} (r_\omega-p_\omega) \\
				& \ge (1-\sqrt{\epsilon}) \sum_{\omega \in \Omega} F_{ a^\delta(p),\omega} (r_\omega-p_\omega) \\
				& =  \sum_{\omega \in \Omega} F_{ a^\delta(p),\omega} (r_\omega-p_\omega').
			\end{align*}
			where the inequality is a consequence of the fact that both $a^\delta(p)$ and $a^{\delta+\epsilon}(p')$ belongs to ${A}^{\delta} (p)$. This, reaches a contradiction with the initial assumption. Thus, we have that:
			\begin{align*}	
				u^\sfP (p',a^{\delta+\epsilon}(p')) & = \sum_{\omega \in \Omega} F_{ a^{\delta+\epsilon}(p'),\omega} (r_\omega-p_\omega')\\ 
				&= \sum_{\omega \in \Omega} F_{ a^\delta(p),\omega} (r_\omega-p_\omega')\\
				& = (1-\sqrt{\epsilon}) \sum_{\omega \in \Omega} F_{ a^{\delta}(p),\omega} (r_\omega-p_\omega)\\
				& \ge \sum_{\omega \in \Omega} F_{ a^{\delta}(p),\omega} (r_\omega-p_\omega) -\sqrt{\epsilon}\\
				& \ge u^\sfP (p, a^\delta(p)) -\sqrt{\epsilon}.
			\end{align*}

			\item if $a^\delta(p) \not\in {A}^{\delta+\epsilon}(p')$, then the following holds:
			\begin{align*}	
				\sum_{\omega \in \Omega} F_{ a^{\delta}(p'),\omega} p_\omega - c_{ a^{\delta}(p)} & \ge \sum_{\omega \in \Omega} F_{ a^{\delta+\epsilon}(p'),\omega} p_\omega - c_{ a^{\delta+\epsilon}(p')} - {\delta },
			\end{align*}
			and, 
			\begin{align*}	
				\sum_{\omega \in \Omega}F_{ a^{\delta+\epsilon}(p'),\omega} p_\omega' - c_{ a^{\delta+\epsilon}(p')}  & \ge \sum_{\omega \in \Omega} F_{ a^{\delta}(p),\omega} p_\omega' - c_{ a^{\delta}(p)} + \delta +\epsilon.
			\end{align*}
			Then, by taking the summation of the above quantities and employing the definition of $p'$  we get:
			\begin{equation*}	
				0 \ge \sum_{\omega \in \Omega} (F_{ a^{\delta}(p),\omega} - F_{ a^{\delta+\epsilon}(p'),\omega}) (r_\omega-p_\omega).
			\end{equation*}
			Then, using the same argument employed at point (a) we can show that:
			\begin{align*}	
				u^\sfP (p,a^{\delta}(p))  - u^\sfP (p',a^{\delta+\epsilon}(p')) \le  \sqrt{\epsilon},
			\end{align*}
			concluding the proof.
		\end{enumerate}
	\end{enumerate}
\end{proof}

\NoRegretThm*
\begin{proof}
	In the following, we let:
	\begin{enumerate}
		\item $p^\star$ be such that $\sum_{\theta \in \Theta}\lambda_\theta u^\sfP(p^\star,a^{\theta,\delta}(p^\star),\theta)=\textnormal{OPT}(\mathcal{C},\delta)$. 
		\item $p' \coloneqq (1-\sqrt{2\epsilon})p^\star + \sqrt{2\epsilon} r$. Notice that $p' \in [0,1]^m $ since both $p,r \in \mathcal{C}$ and $\mathcal{C}$ is convex.
		\item $\overline p \in \mathcal{B}_\epsilon$ be such that $\| \overline p-p'\|_{\infty} \le \epsilon$. Notice that there always exists at least once contract $\bar p$ satisfying the latter condition because of the definition of $\mathcal{B}_\epsilon$ and the fact that $p' \in [0,1]^m $.
		
	\end{enumerate}
	We start by observing that, for each $\theta \in \Theta$, it holds $a^{\theta,\delta}(\overline{p}) \in {A}^{\theta, \delta + 2 \epsilon}(p')$. Indeed, we have:
	\begin{align*}	
		\sum_{\omega \in \Omega} F_{\theta, a^{\theta,\delta}(\overline{p}),\omega} p_\omega' - c_{ \theta,a^{\theta,\delta}(\overline{p})}
		& \ge \sum_{\omega \in \Omega} F_{ \theta,a^{\theta,\delta}(\overline{p}),\omega} \overline{p}_\omega - c_{ \theta,a^{\theta,\delta}(\overline{p})} - \epsilon \\
		& > \sum_{\omega \in \Omega} F_{ \theta,a,\omega} \overline{p}_\omega - c_{ \theta,a} - \epsilon - \delta \\
		& \ge \sum_{\omega \in \Omega} F_{ \theta,a,\omega} p_\omega' - c_{\theta, a} - 2\epsilon - \delta,
	\end{align*}
	for every actions $a \in \A$. The above inequalities follows since:
	$$\sum_{\omega \in \Omega}F_{\theta,a, \omega} (\overline{p}_\omega-p_\omega') \le \|F_{\theta,a}\|_1 \|\overline{p}-p'\|_{\infty} \le \epsilon,$$ for every actions $a \in \A$ and $\theta \in \Theta$.

	Therefore, we can prove that the following holds:
	\begin{align}
		\sum_{\theta \in \Theta } \lambda_\theta u^\sfP(\overline{p}, a^{\theta,\delta}(\overline{p}),\theta) &\ge \sum_{\theta \in \Theta } \lambda_\theta  u^\sfP(p', a^{\theta,\delta}(\overline{p}),\theta) - \epsilon \nonumber \\
		&\ge \sum_{\theta \in \Theta } \lambda_\theta  u^\sfP(p', a^{\theta,\delta+2\epsilon}(p'),\theta) - \epsilon \nonumber \\
		&\ge \sum_{\theta \in \Theta } \lambda_\theta  u^\sfP(p^\star, a^{\theta,\delta}(p^\star),\theta) - 2\sqrt{2\epsilon} \nonumber \\
		&= \textnormal{OPT}(\mathcal{C},\delta) - 2\sqrt{2\epsilon},\label{eq:eps_optimal}
	\end{align}
	where the first inequality above holds because $\sum_{\omega \in \Omega}F_{\theta, a, \omega} (\overline{p}_\omega-p_\omega') \le \|F_{\theta,a}\|_1 \|\overline{p}-p'\|_{\infty} \le \epsilon$ for every actions $a \in \A$ and type $\theta \in\ \Theta$. The second inequality holds because of the definition of $\delta$-best response, while the third inequality follows from Lemma~\ref{lem:epsilonconvert}.

	At this point, we can decompose the cumulative regret as follows:
	\begin{align}
		&R_T(\mathcal{C},\delta)   =   T \cdot \textnormal{OPT}(\mathcal{C},\delta) -  \mathbb{E}\Big[\sum_{t=1}^{T} \sum_{\theta \in \Theta} \lambda_\theta u^\sfP(p^t, a^{\theta,\delta}(p^t),\theta)  \Big] \nonumber\\
		& \le \underbrace{T \cdot \textnormal{OPT}(\mathcal{C},\delta)  - T \cdot \max_{p \in \mathcal{B}_\epsilon} \sum_{\theta \in \Theta} \hspace{-1mm} \lambda_\theta u^\sfP(p,a^{\theta,\delta}(p),\theta) }_{(a)} \\
		&+ \underbrace{ T\cdot \max_{p \in \mathcal{B}_\epsilon}  \sum_{\theta \in \Theta} \hspace{-1mm} \lambda_\theta u^\sfP(p,a^{\theta,\delta}(p),\theta) -  \mathbb{E}\Big[ \sum_{t \in [T],\theta \in \Theta} \hspace{-1mm} \lambda_\theta u^\sfP(p^t, a^{\theta,\delta}(p^t),\theta) \Big]}_{(b)}. \label{eq:regret_decomp}
	\end{align}
	We focus on bounding term (a) in \cref{eq:regret_decomp}.
	Let $\bar p \in [0,1]^m$ be the contract defined at point 3. Then, we have: 
	\begin{align*}
		T \cdot \textnormal{OPT}(\mathcal{C},\delta)- T \max_{p \in \mathcal{B}_\epsilon}   \sum_{\theta \in \Theta} \lambda_\theta  u^\sfP(p^t, a^{\theta,\delta}(p^t),\theta) 
		& \le  T \cdot \textnormal{OPT}(\mathcal{C},\delta)- T \sum_{\theta \in \Theta} \lambda_\theta  u^\sfP(\bar p, a^{\theta,\delta}(\bar p),\theta)  \\
		& \le  2 \sqrt{2\epsilon} \, T,
	\end{align*}
	where the last inequality holds because of \Cref{eq:eps_optimal}.
	We focus on bounding term (b) in \cref{eq:regret_decomp}. By employing the same analysis to prove the regret bound in \texttt{UCB1} we have:
	\begin{equation}
		T \cdot \max_{p \in \mathcal{B}_\epsilon}  \sum_{\theta \in \Theta} \lambda_\theta u^\sfP(p,a^{\theta,\delta}(p),\theta) -  \mathbb{E}\Big[ \sum_{t \in [T]} \sum_{\theta \in \Theta} \lambda_\theta  u^\sfP(p^t, a^{\theta,\delta}(p^t),\theta) \Big] \le \mathcal{O}\left(\sqrt{ T |\mathcal{B}_\epsilon| \log(T ) }\right).
	\end{equation} 
	Finally, observing that $|\mathcal{B}_\epsilon| \le \mathcal{O}{((\nicefrac{1}{\epsilon})}^{m})$, by putting (a) and (b) in \cref{eq:regret_decomp} together and setting $\epsilon= T^{-\frac{1}{m+1}}$, we have:
	\begin{align*}
		R_T (\mathcal{C},\delta) \le \mathcal{\widetilde{O}} \left(   T^{1- \frac{1}{2(m+1)}}  \right),
	\end{align*}
	concluding the proof.
\end{proof}

\NoRegretCor*
\begin{proof}
	Let $p^\star$ be an optimal non-robust contract inside $\mathcal{C}$. Then we have:
	\begin{align*}	
		R_T (\mathcal{C}) & =   T \cdot \textnormal{OPT}(\mathcal{C}) -  \mathbb{E}\Big[\sum_{t=1}^{T} \sum_{\theta \in \Theta} \lambda_\theta u^\sfP(p^t, a^{\theta,\delta}(p^t),\theta)  \Big] \\
		&= T \cdot \left( \textnormal{OPT}(\mathcal{C}) -  \textnormal{OPT}(\mathcal{C}, \delta)\right) + T \cdot \textnormal{OPT}(\mathcal{C}, \delta) -  \mathbb{E}\Big[\sum_{t=1}^{T} \sum_{\theta \in \Theta} \lambda_\theta u^\sfP(p^t, a^{\theta,\delta}(p^t),\theta)  \Big] \\
		&\le 2 \sqrt{\delta} T + T \cdot \textnormal{OPT}(\mathcal{C},\delta)  -  \mathbb{E}\Big[\sum_{t=1}^{T} \sum_{\theta \in \Theta} \lambda_\theta u^\sfP(p^t, a^{\theta,\delta }(p^t),\theta)  \Big]  \\
		& \le 2 \sqrt{\delta} T + \mathcal{\widetilde{O}} \left(   T^{1- \frac{1}{2(m+1)}}  \right) ,
	\end{align*}
	where the first inequality holds by employing the same argument needed to prove Proposition~\ref{pro:properties}, and the second inequality holds thanks to Theorem~\ref{thm:no_regret}, concluding the proof.
\end{proof}


\end{document}